\newif\ifstoc
\newtheorem{lem}{Lemma}[section]
\newtheorem{thm}[lem]{Theorem}
\newtheorem{cor}[lem]{Corollary}
\newtheorem{defn}[lem]{Definition}
\renewcommand{\paragraph}[1]{\vspace{3pt}\noindent\textbf{#1}}
\newcommand{\D}{\mathcal{D}}
\newcommand{\ktnote}[1]{}
\newcommand{\ignore}[1]{}
\DeclareMathOperator*{\argmax}{{\sf argmax}}
\theoremstyle{remark} 
\newcommand{\items}{\mathcal{I}}
\newcommand{\numitems}{m}
\newcommand{\buyers}{\mathcal{B}}
\newcommand{\numbuyers}{n}
\newcommand{\xxi}{x^{(i)}}
\newcommand{\binaryitems}{\{0,1\}^{\numitems}}
\newcommand{\agents}{\mathcal{A}}
\newcommand{\goods}{\mathcal{C}}
\newcommand{\numagents}{N}
\newcommand{\numgoods}{M}
\newcommand{\wwi}{w^{(i)}}
\newcommand{\arrowdeb}{ Arrow-Debreu market }
\begin{document}

\markboth{Niazadeh and Wilkens}{Competitive Equilibria for Non-quasilinear Bidders in Combinatorial Auctions}
	
\title{Competitive Equilibria for Non-quasilinear Bidders\\ in Combinatorial Auctions}

\author[1]{Rad Niazadeh\thanks{rad@cs.cornell.edu}}
\author[2]{Christopher Wilkens\thanks{cwilkens@yahoo-inc.com}}
\affil[1]{Department of Computer Science, Cornell University}
\affil[2]{Yahoo Research Labs}

\renewcommand\Authands{ and }
\maketitle
\begin{abstract}
Quasiliearity is a ubiquitous and questionable assumption in the standard study of Walrasian equilibria. Quasilinearity implies that a buyer's value for goods purchased in a Walrasian equilibrium is always additive with goods purchased with unspent money. It is a particularly suspect assumption in combinatorial auctions, where buyers' complex preferences over goods would naturally extend beyond the items obtained in the Walrasian equilibrium.

We study Walrasian equilibria in combinatorial auctions when quasilinearity is not assumed. We show that existence can be reduced to an Arrow-Debreu style market with one divisible good and many indivisible goods, and that a ``fractional'' Walrasian equilibrium always exists. We also show that standard integral Walrasian equilibria are related to integral solutions of an induced configuration LP associated with a fractional Walrasian equilibrium, generalizing known results for both quasilinear and non-quasilnear settings.
\end{abstract}

\section{Introduction}
Money is inherently useless; it only holds value because of the promise that it can be used to buy something useful. Thus, an agent's utility for money will depend substantially on what she already has and the alternative ways that it can be spent. A student who saves money on her habitual cup of coffee can spend it on many things. On the other hand, a corporate event planner who is given a dedicated budget for drinks may not receive any benefit for discounted coffee; on the contrary, she cannot spend the money on herself, and her budget may get cut next time if she doesn't spend enough on the current event.

General market equilibria capture the ephemerality of money. Arrow and Debreu's exchange model is simple: agents have goods; they sell those goods, then buy what they want most. In this setting, money has no inherent value and is simply a lubricant facilitating exchange. This works because Arrow and Debreu capture the entire economy, so there is nothing outside the market on which to spend money.

In contrast, Walrasian (competitive) equilbria only capture a slice of the overall market. To do so, they must attribute utility to unspent money, since it can be spent elsewhere. In a Walrasian equilibrium model, agents arrive with money and only spend it if the received value outweighs the cost.

The default way to capture the implicit value of money is through a quasilinear utility function, i.e. a bidder's utility $u$ is the difference between her value $v$ and the price she pays $p$. Quasilinearity assumes that an agent's total value is additive in what she gets now and what she gets from an outside option, and that the amount of utility she gets from an outside option scales linearly with the amount of money she applies to it. Both are plausible modeling assumptions, but assuming that they are always true is only somewhat more defensible than assuming that a bidder always has an additive valuation over all items in the market.

We can illustrate one simple violation with our coffee example. Suppose our student will buy exactly one cup of coffee each day. If she doesn't buy coffee now, she will spend \$3 on coffee elsewhere, so her value for coffee now is \$3; however, if we give her {\em both} a cup of coffee {\em and} \$3, she will spend the extra \$3 on something completely different, like a movie ticket (she already has her cup of coffee). Whether this movie ticket gives her the same utility as a cup of coffee, or half the utility, or a quarter of the utility --- there is nothing in our market model to imply that her utility from an extra \$3 is in any way tied to her utility for a cup of coffee, except that it is plausibly {\em at most} her utility for a cup of coffee.\footnote{We know that she chose a cup of coffee over a movie ticket initially, so that implies her value for a cup of coffee is less than her value for a movie ticket. On the other hand, there might also be complementarities here if the student is unable to enjoy the movie without first having a cup of coffee...} In effect, the student's marginal utility for \$3 is completely different depending on whether or not she gets coffee now.

Relaxing quasilinearity for Walrasian equilbria is thus a fundamental question, particularly in combinatorial auction domains where bidders are assumed to have complex preferences over sets of items. This is the topic of our paper. Existing relaxations of quasilinearity focus on unit demand agents, and the literature is quite limited. The existence of Walrasian equilibria was first shown by Quinzii~\shortcite{quinzii1984core}. Later, Alaei et al.~\shortcite{alaei2011competitive} show that Walrasian equilibria exist using a direct approach and have the same kind of lattice structure as standard Walrasian equilibria for unit demand bidders. Maskin~\shortcite{maskin1987fair} studies a superficially different problem and adds a single divisible good (i.e. money) to a standard general market model with indivisible goods; his result is that market equilibria always exist. We study Walrasian equilibria with general utilities in an even more obvious setting: combinatorial auctions. In a combinatorial auction, bidders may have complex preferences over the multiple goods being sold. Thus it is natural that one should extend these complex preferences to items outside the Walrasian micromarket by allowing non-quasilinear preferences for money. Our main results establish conditions under which a Walrasian equilibrium exists.

\paragraph{Fractional Walrasian Equilibria and Market Equilibria}

Walrasian equilibria capture a microcosm of a larger market. To do so, they capture both an agent's utility for unspent money and goods' inherent indivisibility at small scales. As our prior discussion suggests, money is simply a proxy for the portion of the general market outside the goods available in the Walrasian equilibrium. It is therefore not surprising that the first step in our work constructs a reduction from a Walrasian equilibrium problem to a general Arrow-Debreu market with an extra good (money), and an extra agent (the selller). Together, the extra good and agent capture the market outside the Walrasian equilibrium. This market is special because all goods except money have supply 1 and are indivisible. Maskin~\cite{maskin1987fair} studies a similar market without adding the seller as an agent. Our first result shows that the equilibria are the same:
\begin{lem}[Informal]
A set of prices and allocations for a combinatorial auction are a Walrasian equilibrium if and only if they correspond to a market equilibrium of the associated special Arrow-Debreu market.
\end{lem}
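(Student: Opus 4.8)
The plan is to exhibit an explicit dictionary between the two solution concepts and check that it preserves the defining conditions in both directions. First I would pin down notation: in the combinatorial auction each bidder $i$ arrives with a money endowment $b_i \ge 0$ and a (non-quasilinear) utility $u_i(S,m)$ for receiving bundle $S$ and retaining $m$ dollars; a Walrasian equilibrium is a nonnegative item-price vector $p$ together with a feasible allocation $(S_i)_i$ such that each $S_i \in \argmax_{S:\, p(S)\le b_i} u_i(S,\, b_i - p(S))$ and every item with positive price is sold (free disposal). The associated special Arrow--Debreu market has the same items plus one divisible good ``money'' in supply $\sum_i b_i$; agents $1,\dots,n$ are each endowed with $b_i$ units of money and no items, and a seller agent $0$ is endowed with all the items and no money. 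Agent $i$'s utility is $u_i$, read as a function of an item bundle and a quantity of money, and the seller values only money (and is indifferent among items). A market equilibrium is a price pair $(\pi,q)$ --- item prices $\pi$, money price $q$ --- and an allocation in which every agent demands a utility-maximizing bundle within the value of its endowment and all markets clear.

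Second I would construct the forward map. Given a Walrasian equilibrium $(p,(S_i))$, put $q=1$ and $\pi=p$, hand agent $i$ the pair $(S_i,\, b_i - p(S_i))$, and hand the seller all unsold items together with $\sum_i p(S_i)$ dollars. Agent $i$'s endowment is worth exactly $b_i$ and its expenditure is $p(S_i) + (b_i - p(S_i)) = b_i$, so the allocation is budget-feasible; substituting $m = b_i - p(S_i)$ shows that Walrasian optimality of $S_i$ is exactly demand-optimality of $(S_i, b_i - p(S_i))$ in the market. The seller's budget equals the value of all items, and an agent who values only money is demand-optimal when it sells every positively priced item and holds the proceeds; the items it retains are precisely the unsold ones, which have price $0$ by free disposal, so the seller is demand-optimal too. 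Money is conserved ($\sum_i (b_i - p(S_i)) + \sum_i p(S_i) = \sum_i b_i$) and items are conserved, giving market clearing.

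Third I would construct the reverse map and argue it is inverse to the first. Given a market equilibrium $(\pi,q)$, the one real subtlety --- which I expect to be the main obstacle --- is establishing $q>0$; this needs either an explicit desirability hypothesis on money (e.g.\ some $u_i$ strictly increasing in its money argument, the natural ``money is worth something'' assumption) or a short argument from the seller's preferences ruling out $q=0$. Rescaling prices so that $q=1$, set $p=\pi$; then agent $i$'s budget is $b_i$, and demand-optimality says its allocated $(S_i,m_i)$ maximizes $u_i$ subject to $\pi(S_i) + m_i \le b_i$. Local non-satiation in money forces $m_i = b_i - \pi(S_i)$, so $S_i$ is Walrasian-optimal at $p$; item-market clearing (up to free disposal at price $0$) yields a feasible auction allocation with any unallocated item priced at $0$, which is precisely a Walrasian equilibrium. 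Since the two maps compose to the identity modulo the normalization $q=1$, this establishes the claimed equivalence.
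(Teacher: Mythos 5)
Your reduction is the same as the paper's (Definition~\ref{def:red_market} and Lemma~\ref{lem:reduction}): append money as the single divisible good, add a seller endowed with all items who values only money, endow each bidder with money, reinterpret $u_i$ as a function of money retained, and verify both directions by the substitution between money spent and money kept plus market clearing. You also correctly isolate positivity of the money price as the one subtle point; the paper handles it by writing $\tilde{p}_{\numgoods}>0$ directly into the statement of Lemma~\ref{lem:reduction} and proving it separately at the fixed point, via exactly the seller-based argument you sketch. The one substantive divergence is your model of the auction: you give each bidder a hard budget $b_i$ and impose $p(S)\le b_i$ inside the Walrasian demand, whereas in the paper bidders have no budgets --- the Satisfaction condition is an unconstrained $\argmax$ over all bundles --- and the endowment $Z/\numbuyers$ is purely an artifact of the reduction, chosen large enough that the market's budget constraint never excludes an auction-optimal bundle. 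In your version the two $\argmax$'s range over the same feasible set by fiat; in the paper's version one must additionally check that the budget is slack at every bundle being compared (this is where the largeness of $Z$ enters), and that is the one step your writeup would need to add if you adopted the paper's unbudgeted model.
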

The relationship with the special Arrow-Debreu market also lays the foundation for our first main result --- ``fractional'' Walrasian equilibria always exist:
\begin{thm}[Informal]
In a combinatorial auction setting, a fractional Walrasian equilibirium always exists, that is there exists a set of prices and, for each player, a distribution over sets of goods whose support is the collection of all demanded sets under the equilibrium prices, and market clears in expectation. 
\end{thm}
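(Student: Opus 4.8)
The plan is to route through the reduction to the special \arrowdeb established in the (informal) Lemma, relax the indivisibility of the non-money goods, invoke a classical market-equilibrium existence theorem on the relaxed divisible market, and then decompose each buyer's fractional allocation into a lottery over demanded sets.

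\textbf{Step 1 (Reduction).} By the Lemma it suffices to work in the associated special \arrowdeb: one divisible good, money (good $0$, aggregate supply $\sum_i b_i$), and $\numgoods$ indivisible goods each in unit supply, held initially by the seller; buyer $i$ arrives with endowment $b_i$ of money and utility $U_i(S,m)$ for receiving set $S$ and retaining $m$ units of money, while the seller values only money. I would aim to exhibit prices $p^\ast$ and, for each buyer $i$, a distribution over sets such that (a) every set in the support maximizes $U_i(S,b_i-p^\ast(S))$ over all $S$ affordable at $p^\ast$, and (b) the expected total consumption of each good is at most its supply; pulling this back through the Lemma yields the fractional Walrasian equilibrium.

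\textbf{Step 2 (Divisible relaxation and existence).} Replace each indivisible good by a divisible good in unit supply and extend each buyer's utility to fractional money--goods bundles by its concave closure $\widehat U_i$ (which is concave, continuous, nondecreasing in money, and agrees with $U_i$ on integral bundles). The resulting divisible exchange economy has finitely many agents with continuous, quasi-concave, money-monotone utilities and strictly positive wealth — each buyer holds $b_i>0$, and money commands a positive price since the seller desires only money, so we may normalize $p_0=1$ — hence a market equilibrium $(p^\ast,(y^i)_i,y^{\mathrm{sell}})$ exists by the Arrow--Debreu--McKenzie theorem (equivalently, by a Kakutani fixed point applied to aggregate excess demand, in the style of Gale--Nikaido--Debreu). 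At equilibrium, $y^i$ maximizes $\widehat U_i$ on the budget set $\{y:p^\ast\cdot y\le b_i\}$ and markets clear.

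\textbf{Step 3 (From fractional bundles to lotteries over demanded sets).} Fix buyer $i$ with equilibrium bundle $y^i=(m^i,q^i)$. Since $\widehat U_i$ is the concave closure of $U_i$, write $\widehat U_i(y^i)=\sum_k\lambda_k\,U_i(S_k,m_k)$ with $\sum_k\lambda_k(m_k,\mathbf 1_{S_k})=y^i$, $\sum_k\lambda_k=1$. From $\sum_k\lambda_k(m_k+p^\ast(S_k))=p^\ast\cdot y^i\le b_i$ and monotonicity of $U_i$ in money, a short argument lets us take each summand with $m_k=b_i-p^\ast(S_k)\ge0$, so every $S_k$ is affordable; optimality of $y^i$ for $\widehat U_i$ then forces each $S_k$ to be a \emph{demanded} set. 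Thus $D_i:=\sum_k\lambda_k\,\delta_{S_k}$ is supported on demanded sets with $\E_{S\sim D_i}[\mathbf 1_S]=q^i$, and market clearing of the relaxed economy gives $\sum_i q^i=\mathbf 1-q^{\mathrm{sell}}\le\mathbf 1$, i.e. clearing in expectation. Finally, to get the support equal to the full collection of demanded sets, mix in the remaining sets of $\mathcal D_i(p^\ast)$ with vanishing weight along a path inside $\mathrm{conv}(\{\mathbf 1_S:S\in\mathcal D_i(p^\ast)\})$, which changes $q^i$ arbitrarily little and hence preserves (b).

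\textbf{Main obstacle.} The delicate point is Step 3 under non-quasilinearity: the concave-closure decomposition of $y^i$ is only feasible \emph{on average}, and the utility of a lottery over sets is $\sum_k\lambda_k U_i(S_k,b_i-p^\ast(S_k))$ rather than anything evaluated at the average money level, so the usual ``fractional optimum $=$ mixture of integral optima'' reasoning has to be redone carefully, using monotonicity in money both to discard summands that overspend and to certify that each $S_k$ is genuinely demanded. A secondary difficulty is checking the hypotheses of the existence theorem for merely continuous, monotone $U_i$ — in particular quasi-concavity of $\widehat U_i$ jointly in money and goods — which may force an explicit assumption that $U_i$ is concave in money, or a direct fixed-point argument tailored to the correspondence $p\mapsto\mathrm{conv}(\{\mathbf 1_S:S\in\mathcal D_i(p)\})$ in place of a black-box citation.
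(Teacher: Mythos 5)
Your overall architecture (reduce to the special \arrowdeb, convexify, apply a fixed-point/existence theorem, then decompose the fractional demand into a lottery over integral bundles) matches the paper's in spirit, but the specific convexification you choose in Steps 2--3 has a genuine gap. You relax the economy by replacing each buyer's utility with its concave closure $\widehat U_i$ over fractional bundles and then maximize $\widehat U_i$ on the budget set $\{y : p^\ast\cdot y\le b_i\}$. The resulting maximizer need \emph{not} lie in $\mathrm{conv}(D_i(p^\ast))$: in the decomposition $y^i=\sum_k\lambda_k(m_k,\mathbf 1_{S_k})$ the budget is satisfied only \emph{on average}, so the lottery can cross-subsidize across outcomes --- some $S_k$ in the support may satisfy $p^\ast(S_k)>b_i$ (paired with $m_k$ near $0$) and be strictly unaffordable, hence not a demanded set, and your proposed repair $m_k:=b_i-p^\ast(S_k)$ is then negative and unavailable. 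A concrete instance: with $b_i=1$, $p^\ast_1=p^\ast_2=1$, $U_i(\emptyset,m)=m$ and $U_i(\{1,2\},m)=10+m$, the concave-closure optimum puts weight $1/2$ on the unaffordable bundle $\{1,2\}$ and strictly beats every affordable integral bundle. So ``fractional optimum $=$ mixture of integral optima,'' which your Step 3 needs, fails for this relaxation. The fix you mention only as a fallback in your last paragraph is in fact what the paper does and is not optional: it applies Kakutani directly to the correspondence $(p,d)\mapsto(\mathrm{Conv}(D(p)),\mathcal F(d))$, where $D_i(p)$ is the (finite) set of \emph{affordable utility-maximizing integral} bundles. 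Convexifying the demand sets rather than the utility functions makes per-outcome affordability and per-outcome optimality hold by construction, so every extreme point of $\tilde D_i(p^\ast)$ is automatically a demanded set and the decomposition into $\{x_{i,S}\}$ is immediate; the remaining work (showing $p^\ast_{\numgoods}>0$ via the seller's unbounded appetite for money, $d^\ast_j\le 1$, and market clearance from $p^\ast\cdot(d^\ast-(1,\dots,1,Z))=0$) is exactly the Gale--Nikaido--Debreu bookkeeping you gesture at.

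Two smaller points. First, your final ``mix in the remaining demanded sets with vanishing weight'' step is unsound as stated: market clearance is an \emph{equality} for goods with $p_j>0$, so perturbing $q^i$ ``arbitrarily little'' can break it; fortunately the paper's formal Definition~\ref{def:fracWE} only requires the support to be \emph{contained in} the demanded sets, so this step is unnecessary. Second, your worry about quasi-concavity of $\widehat U_i$ is a red herring (a concave closure is concave by construction); the real hypothesis-checking burden is the one above, namely that the fixed-point argument must be run on the convexified demand correspondence of the indivisible economy, not on a relaxed divisible economy.
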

This follows by proving that an auction's associated general market model always has a fixed point, and that demand of a bidder at the fixed point can always be decomposed into a distribution over goods.

\paragraph{Configuration LPs and True Walrasian Equilibria}

To understand true (non-fractional) Walrasian equilibria, we must understand the combinatorial auction's {\em configuration linear program}. In a quasilinear setting, the configuration LP captures the way goods can be (fractionally) assigned to bidders; the LP's objective is the total value generated by the assignment. Walrasian equilibria here are known to be equivalent to integral optima of an auction's configuration LP.

Unfortunately, the configuration LP is not available without quasilinearity because a bidder's value is not well-defined. Instead, we introduce an {\em induced configuration LP} that is associated with a particular price vector $p^*$. This LP is constructed by fixing bidders' utilities at $p^*$ and assuming they are otherwise quasilinear. Integral optima are again related to Walrasian equilibria, but only if $p^*$ already supported a fractional Walrasian equilibrium:
\begin{thm}[Informal]
A Walrasian equilibrium for a combinatorial auction exists if and only if there is a price vector $p^*$ supporting a fractional Walrasian equilibrium for which the induced configuration LP has an integral optimum.
\end{thm}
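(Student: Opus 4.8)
The plan is to route everything through linear-programming duality for the induced configuration LP, which I will call $L_{p^*}$. Write $U_i(S,m)$ for bidder $i$'s (non-quasilinear) utility for receiving bundle $S$ and paying $m$, so, writing $p^*(S)=\sum_{j\in S}p^*_j$, a set $S$ is demanded at $p^*$ exactly when $U_i(S,p^*(S))=\max_T U_i(T,p^*(T))$, the maximum over all bundles including $\emptyset$ (with $U_i(\emptyset,0)$ normalized to $0$ by individual rationality). The induced values are $\tilde v_i(S):=U_i(S,p^*(S))+p^*(S)$, so under the quasilinear surrogate the utility of $S$ at $p^*$ is $\tilde v_i(S)-p^*(S)=U_i(S,p^*(S))$; hence ``demanded at $p^*$'' coincides for the true bidder and the surrogate. $L_{p^*}$ maximizes $\sum_{i,S}\tilde v_i(S)x_{i,S}$ over $x\ge 0$ subject to $\sum_S x_{i,S}\le 1$ for each bidder $i$ and $\sum_{i:\,S\ni j}x_{i,S}\le 1$ for each good $j$; its dual has nonnegative variables $\pi_i$ (per bidder) and $q_j$ (per good), minimizes $\sum_i\pi_i+\sum_j q_j$, and has constraints $\pi_i+\sum_{j\in S}q_j\ge\tilde v_i(S)$.

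The heart of the argument is one lemma: \emph{if $p^*$ supports a fractional Walrasian equilibrium with per-bidder distributions $D_i$, then $x^*_{i,S}:=\Pr_{D_i}[S]$ is optimal for $L_{p^*}$ and the dual point $q_j:=p^*_j$, $\pi_i:=\max_T U_i(T,p^*(T))$ is dual-optimal.} I prove it by complementary slackness. Dual feasibility is immediate: $q_j=p^*_j\ge 0$, $\pi_i\ge U_i(\emptyset,0)=0$, and $\pi_i\ge\tilde v_i(S)-p^*(S)$ for all $S$ by definition. For primal feasibility, each $D_i$ is a distribution so $\sum_S x^*_{i,S}=1$, and market clearing in expectation gives $\sum_{i:S\ni j}x^*_{i,S}\le 1$, with equality whenever $p^*_j>0$. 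Complementary slackness then holds termwise: (i) $x^*_{i,S}>0$ forces $S\in\mathrm{supp}(D_i)$, hence $S$ demanded, hence $\tilde v_i(S)-p^*(S)=\pi_i$ and the dual constraint is tight; (ii) $q_j=p^*_j>0$ forces good $j$ fully sold in expectation, so the $j$-th primal constraint is tight; (iii) pairing $\pi_i>0$ with a primal constraint is automatic since $\sum_S x^*_{i,S}=1$. So both points are optimal.

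Given the lemma, both directions are short. Forward: a Walrasian equilibrium $(p^*,(S_i)_i)$ is in particular a fractional Walrasian equilibrium supported by $p^*$ (all mass on $S_i$), and the corresponding $x^*$ is the $0/1$ indicator of the allocation; by the lemma $x^*$ is optimal for $L_{p^*}$, so $L_{p^*}$ has an integral optimum (alternatively, invoke the known quasilinear equivalence of Walrasian equilibria with integral configuration-LP optima applied to $\tilde v$). Backward: suppose $p^*$ supports a fractional Walrasian equilibrium and $L_{p^*}$ has an integral optimum $x^{\mathrm{int}}$. By the lemma $(q,\pi)$ above is dual-optimal; since $x^{\mathrm{int}}$ is primal-optimal, complementary slackness between them yields: every good $j$ with $p^*_j>0$ is fully allocated (so unsold goods are priced $0$); every bidder gets at most one bundle; and whenever $x^{\mathrm{int}}_{i,S}=1$ the dual constraint is tight, i.e.\ $U_i(S,p^*(S))=\tilde v_i(S)-p^*(S)=\pi_i=\max_T U_i(T,p^*(T))$, so the bundle $i$ receives (or $\emptyset$, when $\pi_i=0$ and $i$ receives nothing) is demanded at $p^*$. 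These are precisely the conditions defining a Walrasian equilibrium, so $(p^*,x^{\mathrm{int}})$ is one.

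I expect the main obstacle to be pinning down the LP/dual formulation so that its complementary-slackness conditions match \emph{exactly} the definitions of ``demanded set'' and ``market clears (in expectation)'': the bookkeeping around the empty bundle, individual rationality $U_i(\emptyset,0)=0$, whether the bidder/good constraints are equalities or inequalities, and the (implicit) seller's role in absorbing unsold goods. Everything else is routine once the fractional-WE-to-dual-optimality lemma is established; note that this lemma is also what breaks the apparent circularity that $L_{p^*}$ is itself defined in terms of $p^*$.
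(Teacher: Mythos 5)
Your proposal is correct and follows essentially the same route as the paper: the same dual certificate ($q_j=p^*_j$, $\pi_i=\max_T U_i(T,p^*(T))$), certified optimal via complementary slackness against the fractional-WE allocation, and then complementary slackness between an integral optimum and that dual point to read off the satisfaction and market-clearance conditions. The only difference is presentational --- the paper states the formal theorem via fixed points of its Arrow--Debreu market correspondence and records the fractional-WE reformulation as a corollary, whereas you work with the fractional WE directly, which is exactly the form of the informal statement.
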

This theorem has a couple of interesting corollaries. First, we can see how it relates to the results of Maskin~\shortcite{maskin1987fair}, Quinzii~\shortcite{quinzii1984core}, and Alaei et al.~\shortcite{alaei2011competitive}:
\begin{cor}
If the induced configuration LP is always integral, a combinatorial auction always has a Walrasian equilibrium.
\end{cor}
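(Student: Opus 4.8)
The plan is to derive the corollary directly from the two main results already established, so the proof is essentially a chaining argument. First I would invoke the existence theorem for fractional Walrasian equilibria: in any combinatorial auction there is \emph{some} price vector $p^*$ together with, for each player, a distribution over demanded sets that clears the market in expectation. In particular, this $p^*$ supports a fractional Walrasian equilibrium, so the hypothesis of the characterization theorem for true Walrasian equilibria — namely the existence of a price vector supporting a fractional equilibrium — is satisfied with this particular $p^*$.

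Next I would bring in the hypothesis of the corollary, that the induced configuration LP is always integral, and apply it to the LP induced by the specific price vector $p^*$ produced above. Integrality of the LP means its feasible polytope has only integral vertices; since this polytope is bounded (the supply-$1$ and nonnegativity/normalization constraints confine it) and the objective — the total value of bidders quasilinearized at $p^*$ — is linear, an optimum is attained at a vertex, hence at an integral point. Thus $p^*$ is a price vector supporting a fractional Walrasian equilibrium whose induced configuration LP admits an integral optimum.

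Finally I would invoke the ``if'' direction of the characterization theorem for true Walrasian equilibria: the existence of such a $p^*$ implies that a (non-fractional) Walrasian equilibrium exists. Composing the three facts yields the corollary. I expect essentially no serious obstacle; the two points deserving a line of care are (i) confirming that ``always integral'' is being used for the very $p^*$ delivered by the fractional-existence theorem rather than for some arbitrarily chosen price vector, and (ii) noting that integrality of the polytope together with boundedness of the LP gives an integral \emph{optimizer}, not merely an integral feasible point. It is also worth remarking that this corollary is exactly what recovers the prior results of Maskin, Quinzii, and Alaei et al.\ in their restricted settings (unit-demand bidders, or Maskin's single-divisible-good market), since there the relevant assignment polytope is known to be integral.
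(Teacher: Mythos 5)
Your proposal is correct and matches the paper's own reasoning: the paper derives this corollary by chaining Theorem~\ref{thm:fracWE} (existence of a fractional WE price vector $p$) with the ``if'' direction of Theorem~\ref{thm:intWE} (equivalently Corollary~\ref{cor:alter}), exactly as it does explicitly in the unit-demand case where integrality of the matching polytope supplies the integral optimum. Your two points of care --- applying integrality at the specific $p^*$ from the fractional-existence theorem, and passing from an integral polytope to an integral optimizer --- are the right ones and are consistent with the paper's argument.
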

In unit-demand settings, the induced configuration LP is a matching LP at any fixed price $p^*$. Thus, it is integral and always has an integral optimum, and Walrasian equilibria will always exist. This generalizes the results of Maskin, Quinzii and Alaei et al. --- Maskin studied a sibling of our general market model and effectively showed that when bidders are unit demand, there is always an integral solution for every fixed point, while Alaei et al. directly show that Walrasian equilibria exist for unit demand settings.

Another simple but useful corollary happens when the configuration LP is independent of $p^*$:
\begin{cor}
If the induced configuration LP is independent of $p^*$, then a combinatorial auction has a Walrasian equilibrium if and only if the configuration LP has an integral optimum, regardless of any properties of $p^*$.
\end{cor}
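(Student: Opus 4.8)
The plan is to derive the corollary directly from the preceding theorem, using the fractional–existence theorem only to supply a supporting price vector when one is needed. Write $\mathrm{cLP}(p^*)$ for the induced configuration LP associated with a price vector $p^*$. Under the hypothesis of the corollary all of these linear programs coincide, so we may speak of a single program, call it $\mathrm{cLP}$, and the question ``does the configuration LP have an integral optimum?'' is well posed independently of any price vector.

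For the \emph{if} direction, suppose $\mathrm{cLP}$ has an integral optimum. By the fractional–existence theorem, the combinatorial auction admits a fractional Walrasian equilibrium; fix a price vector $p^*$ that supports it. Since the induced configuration LP does not depend on $p^*$, the program $\mathrm{cLP}(p^*)$ is exactly $\mathrm{cLP}$, which by assumption has an integral optimum. Hence $p^*$ meets both requirements in the hypothesis of the main theorem — it supports a fractional Walrasian equilibrium, and its induced configuration LP has an integral optimum — and the theorem yields a (true) Walrasian equilibrium.

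For the \emph{only if} direction, suppose a Walrasian equilibrium exists. The main theorem produces a price vector $p^*$ that supports a fractional Walrasian equilibrium and whose induced configuration LP $\mathrm{cLP}(p^*)$ has an integral optimum. The independence hypothesis gives $\mathrm{cLP}(p^*) = \mathrm{cLP}$, so $\mathrm{cLP}$ has an integral optimum. The clause ``regardless of any properties of $p^*$'' is then automatic, since the conclusion now refers to the single fixed program $\mathrm{cLP}$ and no particular choice of $p^*$ survives in the statement.

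There is no substantive obstacle here; the only point to be careful about is the reading of ``independent of $p^*$.'' The argument above goes through verbatim under the natural interpretation that the feasible region and objective of the induced configuration LP do not vary with $p^*$ (equivalently, the set of integral optima is the same for all $p^*$), and I would state this interpretation explicitly before invoking it. The mild loose end — having a price vector to feed into the main theorem in the \emph{if} direction — is precisely what the fractional–existence theorem resolves.
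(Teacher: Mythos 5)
Your proof is correct and follows essentially the same route as the paper: the paper's own argument (given for the quasilinear instance of this corollary) likewise combines Theorem~\ref{thm:fracWE} with Corollary~\ref{cor:alter}/Theorem~\ref{thm:intWE} to obtain a fractional WE price vector $p$ and then observes that the induced configuration LP at $p$ coincides with the single price-independent configuration LP. Your explicit remark about pinning down the meaning of ``independent of $p^*$'' is a reasonable clarification but does not change the substance.
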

The dependence on $p^*$ cancels when utilities are quasilinear, and it explains why we can simply talk about the configuration LP without talking about a specific set of prices $p^*$.

Together, these results build a picture of the equilibrium landscape outside quasilinearity --- Walrasian equilibria still exist at least in a fractional form, but testing existence in general is substantially more complicated. Existence is still related to a configuration LP, but that LP can only be defined once prices $p^*$ are in hand.

\paragraph{Related Work}
The objective of this paper is establishing existence characterization for competitive equilibria in combinatorial auction. The problem is closely related to the existing literature in economics and theoretical computer science from different directions. First, there is a prominent literature on competitive equilibrium in combinatorial auctions for the quasilinear setting. These papers characterize existence conditions for a Walrasian equilibrium and provide practical necessary conditions for a competitive equilibrium to exist in the quasilinear setting such as gross substitute, e.g. Gul and Stacchetti ~\shortcite
{gul1999walrasian}, Bikhchandani and Mamer~\shortcite{bikhchandani1997competitive}, Bevia, Quinzii, and and Silva\shortcite{bevia1999buying}, Murota and Tamura~\shortcite{murota2001computation}. People have also thought about using the properties of competitive equilibria in quasilinear settings when valuations are satisfying the gross substitute, such as lattice structure\cite{gul1999walrasian}, in order to design ascending auctions and identifying the connections between the well-known VCG mechanism~\cite{vickrey1961counterspeculation,groves1973incentives,clarke1971multipart} and Walrasian equilibria~\cite{kelso1982job,cramton2006combinatorial,nisan2007algorithmic}. Closely related is the literature on assignment games and core allocations, in which they tried to generalize the stable matching concept (either one-to-one matching, one-to-many matching, or many-to-many matching) to two-sided markets with indivisible goods and quasilinear utilities, e.g. Shapley and Shubik~\shortcite{shapley1971assignment} and Echenique et al.~\shortcite{echenique2004theory}.

The second direction that connects our work to the literature is the existing work on non-quasilinear utilities (or non-transferable utilities) and two-sided matching markets or general \arrowdeb~\cite{arrow1954existence} with only one divisible good and unit-demand agents. The problem formulation is as introduced by Demange and Gale~\shortcite{demange1985strategy}. The existence of competitive equilibria for non-quasilinear utilities was first proved by Quinzii~\shortcite{quinzii1984core}, Gale ~\shortcite{gale1984equilibrium}, Svensson~\shortcite{svensson1984competitive}, and later by Kaneko and Yamamoto~\shortcite{kaneko1986existence}. They showed if there is a single divisible good (say money) in an economy and if agents are unit-demand then there still exists a competitive equilibrium under certain reasonable (monotonicity) conditions. There is also the work of Maskin\shortcite{maskin1987fair} on fair allocation of indivisible goods with money, that provides a simpler proof for the existence of the equilibrium with indivisible goods and only one divisible item in the unit-demand case. More recently, there has been a work on two-sided matching markets with non-transferable utilities by Alaei et al.~\shortcite{alaei2011competitive} that followed a different combinatorial approach. They showed the existence of competitive equilibrium when utility functions are monotone, and generalized the lattice structure and properties associated with the minimum lattice point to a general non-quasilinear setting. 
\section{Settings and Notations}
We are looking at a \emph{combinatorial auction}, in which we have a set $\items$ of $\numitems$ items and a set $\buyers$ of  $\numbuyers$ buyers interested in these items. For every  $x\in\{{0,1\}}^\numitems$ and price $p\in \mathbb{R}_{+}$, let $u_i(x,p)$ be the utility of bidder $i$ if she gets bundle $X=\{j\in \items: x_j=1\}$ of items under the price $p$. We assume utilities are strictly decreasing and continuous with respect to the price $p$, and increasing with respect to $x_j$ for every item $j$. The competitive equilibrium (also known as \emph{Walrasian equilibrium}) can be defined in this setting as follows.
\begin{defn}
\label{def:WE}
A  Walrasian Equilibrium (WE) is a pair of allocation and prices $(\{\xxi\}_{i=1}^\numbuyers,\{p_j\}_{j=1}^{\numitems})$ that satisfies the following conditions: 
\begin{itemize}
\item $\forall i\in \buyers$ and $j\in\items:~\xxi\in \binaryitems$ and $p_j\in \mathbb{R}_+$.
\item \textbf{[Feasibility]} $\forall j\in \items:~\sum_{i=1}^{\numbuyers}\xxi_{j}\leq 1$.
\item \textbf{[Satisfaction]} $\forall i\in \buyers:~\xxi\in\underset{x'\in \binaryitems}{\argmax}u_i(x',\sum_{j=1}^{\numitems} p_jx'_j)$,
\item \textbf{[Market clearance]} $\forall j\in\items$, if $p_j>0$ then $\sum_{i=1}^n\xxi_j=1$.
\end{itemize}
\end{defn}
Besides WE, we also need to define a fractional equilibrium, in which each buyer has a distribution over bundles of items. However, such an allocation is only feasible in expectation, meaning that each item gets allocated with probability less than or equal to $1$. Note that such an equilibrium cannot be realized in reality and it is just a solution concept that will shed insight on the structure of WE, as we show later in this paper.  More precisely, we have the following definition:
\begin{defn} 
\label{def:fracWE}
In a combinatorial auction $( \items,\buyers,\{u_i(.)\}_{i\in\buyers})$, a \emph{fractional} WE is defined to be a pair of allocation and prices $(\{x_{i,S}\},\{p_j\})$ such that
\begin{itemize}
\item $\forall (i,S)\in \buyers\times 2^{\items}$ and $j\in\items:~~$ $x_{i,S}\in\mathbb{R}_{+}$ and $p_j\in\mathbb{R}_{+}$.
\item\textbf{[Feasibility]}~$\forall j\in\items:\displaystyle\sum_{i\in \buyers, S:j\in S}x_{i,S}\leq 1$,$~~~~~\forall i\in\buyers: \displaystyle\sum_{S\subseteq\items}x_{i,S}=1$.
\item\textbf{[Satisfaction]}~$\forall i\in \buyers$, if $x_{i,S}>0$ then $\mathbf{1}_S\in\underset{x'\in \binaryitems}{\argmax}u_i(x',\sum_{j=1}^{\numitems} p_jx'_j)$.
\item \textbf{[Market clearance]}~$\forall j\in \items$, if $p_j>0$ then $\displaystyle \sum_{i\in \buyers, S:j\in S}x_{i,S}=1$.
\end{itemize}
\end{defn}

In this paper, we also talk about a special case of \emph{Arrow-Debreu markets} in which all commodities except one are indivisible. In such markets, there is a set $\agents$ of $\numagents$ of agents in the market who are interested in trading a set $\goods$ of $\numgoods$ commodities. We assume all commodities are indivisible except the last commodity $j=\numgoods$ (we sometimes call this commodity `money'). Each agent $i$ will bring an endowment $\wwi \in\mathbb{R}_+^{\numgoods}$ of commodities to the market to trade. Agent $i$ gets a utility of $\tilde{u}_{i}(x)$ for an allocation  $x \in \{0,1\}^{\numgoods-1}\times\mathbb{R}_+$ of commodities. Moreover, we assume $\tilde{u}_i(x)$ is strictly increasing with respect to $x_j$ for all $j\in\goods$ and continuous with respect to allocation of money, i.e. $x_{\numgoods}$. We next define the \emph{general market equilibrium} for such a market. 
\begin{defn}
A  General Market Equilibrium (GME) is a pair of allocation and prices $(\{\xxi\}_{i=1}^\numagents,\{p_j\}_{j=1}^{\numgoods})$ that satisfies the following conditions:
\begin{itemize}
\item $\forall i\in \agents$ and $j\in\goods/\{\numgoods\}:~\xxi_j\in \{0,1\} $ and $p_j\in\mathbb{R}_+$. 
\item $\forall i \in \agents:~\xxi_{\numgoods}\in\mathbb{R}_+$ and $P_M\in \mathbb{R}_+$.
\item \textbf{[Satisfaction]} $\forall i\in \agents:\xxi\in \underset{{j\neq \numgoods}:x'_j\in \{0,1\}, x'_{\numgoods}\in\mathbb{R}_+}{\argmax}\tilde{u}_i(x')~\textrm{s.t}.~ \sum_{j=1}^{\numgoods}x'_jp_j\leq \sum_{j=1}^{\numgoods}\wwi_jp_j$.
\item \textbf{[Market clearance]}~$\forall j\in \goods$, if $p_j>0$ then $\sum_{i=1}^\numagents \xxi_j=\sum_{i=1}^{\numagents}\wwi_j$.
\end{itemize}
\end{defn}
To define more notations for an \arrowdeb with only one divisible good, let $D_i(\{p_j\}_{j\in \goods})$ be the collection of feasible allocation of commodities to agent $i$, such that each maximizes utility of agent $i$ under prices $\{p_j\}_{j\in \goods}$ and they satisfy the budget constraint of agent $i$. In other words:
\begin{equation}
D_i(\{p_j\}_{j\in \goods})\triangleq \underset{{j\neq \numgoods}:x'_j\in \{0,1\}, x'_{\numgoods}\in\mathbb{R}_+}{\argmax}\tilde{u}_i(x')~\textrm{s.t.}~ \sum_{j=1}^{\numgoods}x'_jp_j\leq \sum_{j=1}^{\numgoods}\wwi_jp_j
\end{equation}
Let $\bar{D}_i({\{p_j\}_{j\in \goods}})$ be all vectors in $D_i(\{p_j\}_{j\in \goods})$ when we delete the allocation of the divisible good, i.e. last coordinate, from all vectors. Define total demand to be $D({\{p_j\}_{j\in \goods}})\triangleq \sum_{i\in \agents}D_i(\{p_j\}_{j\in \goods})$ and the total demand for items to be $\bar{D}({\{p_j\}_{j\in \goods}})\triangleq\sum_{i\in \agents}\bar{D}_i(\{p_j\}_{j\in \goods})$, where summations are Minkowski summations of sets. Moreover, let $\tilde{D}_i=\textrm{Conv}(D_i)$ and $\tilde{D}=\textrm{Conv}(D)$ where $\textrm{Conv}(.)$ is the convex hull of its argument. Clearly, all these sets are finite (because utility is strictly increasing in money and hence given an allocation of indivisible items the allocation of money will be the unique number that fills the budget slack) and hence convex hulls are well defined. Also, from the definition of convex hull and Minkowski summation, $\tilde{D}=\sum_{i\in \agents}\tilde{D}_i$.

\section{Reduction from combinatorial auction to \arrowdeb}
We start by defining a general \arrowdeb with one divisible good.
\begin{defn}
\label{def:red_market}
Given a combinatorial auction $( \items,\buyers,\{u_i(.)\}_{i\in\buyers})$,  its \emph{corresponding \arrowdeb} $(\goods,\agents,\{\tilde{u}_i(.)\}_{i\in \agents}, \{\wwi\}_{i\in\agents})$ is the following:
\begin{itemize}
\item There are $\numgoods=\numitems+1$ commodities, where the first $\numitems$ indivisible commodities in $\goods$ are items in $\items$, and the last divisible commodity is a special commodity called `money'. 
\item There are $\numagents=\numbuyers+1$ agents:
\begin{itemize}
\item For every $ i\in [\numbuyers]$ we have $\tilde{u}_i(x,y)=u_i(x,\frac{Z}{\numbuyers}-y)$ for every $x\in \{0,1\}^{\numitems}, y\in \mathbb{R}_+$, where $Z$ is large enough such that  $Z>\sum_{i=1}^n u_i(\mathbf{1},0)$,
\item The last agent is a special agent called the `seller' and her utility is computed as $\tilde{u}_{\numbuyers+1}(x,y)=y$ for every $x\in \{0,1\}^{\numitems}, y\in \mathbb{R}_+$.
\end{itemize}
\item  For every $i \in [\numbuyers]$, endowment of agent $i$ is $\wwi=(0,0,\ldots,0,\frac{Z}{\numbuyers})$. For the seller, $w_{\numbuyers+1}=(1,1,\ldots,1,0)$.
\end{itemize}
\end{defn}
We now have the following lemma, which basically shows that the correspondence described in Definition~\ref{def:red_market} preserves the equilibrium.
\begin{lem}
\label{lem:reduction}
A pair $(\{\xxi\}_{i=1}^\numbuyers,\{p_j\}_{j=1}^{\numitems})$ is a WE for the combinatorial auction $( \items,\buyers,\{u_i(.)\}_{i\in\buyers})$ if and only if there exists a GME $(\{\tilde{x}^{(i)}\}_{i=1}^\numagents,\{\tilde{p}_j\}_{j=1}^{\numgoods})$ for its corresponding \arrowdeb $(\goods,\agents,\{\tilde{u}_i(.)\}_{i\in \agents}, \{\wwi\}_{i\in\agents})$ as in Definition~\ref{def:red_market} such that $\forall (i,j)\in [\numagents-1]\times [\numgoods-1]:\tilde{x}^{(i)}_j=\xxi_j$, $\tilde{x}^{(\numagents)}_{\numgoods}=\sum_{j=1}^{\numitems}p_j$, $\forall j\in [\numgoods-1]:\frac{\tilde{p}_j}{\tilde{p}_\numgoods}=p_j$ and $\tilde{p}_{\numgoods}>0$. 
\end{lem}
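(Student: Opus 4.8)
The plan is to prove the two directions of the ``if and only if'' separately, both built on a single reduction step. First I would record the key observation: in the market of Definition~\ref{def:red_market}, each buyer's utility $\tilde u_i(x,y)=u_i(x,\frac{Z}{\numbuyers}-y)$ is strictly increasing in the money coordinate $y$, so at any point of the demand set $D_i(\tilde p)$ the budget constraint $\sum_{j=1}^{\numgoods}x'_j\tilde p_j\le\sum_{j=1}^{\numgoods}\wwi_j\tilde p_j=\frac{Z}{\numbuyers}\tilde p_{\numgoods}$ must bind. Writing $p_j=\tilde p_j/\tilde p_{\numgoods}$ (legitimate because $\tilde p_{\numgoods}>0$), this forces $x'_{\numgoods}=\frac{Z}{\numbuyers}-\sum_{j\le\numitems}p_jx'_j$, hence $\tilde u_i(x',x'_{\numgoods})=u_i(x',\sum_j p_jx'_j)$, which is exactly buyer $i$'s auction utility for the bundle $x'$ at the price she would pay. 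So, up to the affordability restriction $x'_{\numgoods}\ge 0$, maximizing $\tilde u_i$ over the budget set is the same problem as the auction's [Satisfaction] maximization. Symmetrically, the seller has utility equal to her money holding and endowment value $\sum_{j\le\numitems}\tilde p_j$, so at her optimum she ends with exactly $\sum_j p_j$ units of money and holds nothing of any item with $\tilde p_j>0$; this is what connects her demand to the item-side [Market clearance]. Finally, a GME is invariant under rescaling all prices by a positive constant. These facts are the engine of both directions.

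For the forward direction, given a WE $(\{\xxi\},\{p_j\})$ I would set $\tilde p_j=p_j$ for $j\le\numitems$ and $\tilde p_{\numgoods}=1$; set $\tilde x^{(i)}=(\xxi,\ \frac{Z}{\numbuyers}-\sum_j p_j\xxi_j)$ for each buyer $i\le\numbuyers$; and let the seller hold $\sum_j p_j$ money, $0$ of every positively priced item, and the leftover supply of each zero-priced item. The range condition $\tilde x^{(i)}_{\numgoods}\ge 0$ holds since $Z$ is taken large enough (Definition~\ref{def:red_market}) that $\sum_j p_j\xxi_j\le\frac{Z}{\numbuyers}$. Then I would check the GME conditions: [Satisfaction] for buyer $i$ because $\xxi$ is budget-feasible and, by WE [Satisfaction], maximizes $u_i(x',\sum_jp_jx'_j)$ over all of $\binaryitems$ and hence over the smaller budget-feasible set; [Satisfaction] for the seller from the remark above; [Market clearance] by noting that for a positively priced item $j$, WE [Market clearance] gives $\sum_{i\le\numbuyers}\xxi_j=1$ with the seller holding $0$, and for the money good $\sum_{i\le\numbuyers}(\frac{Z}{\numbuyers}-\sum_kp_k\xxi_k)+\sum_kp_k=Z-\sum_kp_k(\sum_i\xxi_k-1)=Z$, the middle equality because each term $p_k(\sum_i\xxi_k-1)$ vanishes by WE [Market clearance]/feasibility. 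Feasibility of the constructed allocation is immediate, and the stated correspondence holds by construction.

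For the backward direction, suppose a GME $(\{\tilde x^{(i)}\},\{\tilde p_j\})$ with $\tilde p_{\numgoods}>0$ realizes the stated correspondence, and set $p_j=\tilde p_j/\tilde p_{\numgoods}$ and $\xxi_j=\tilde x^{(i)}_j$. For [Satisfaction] of buyer $i$: GME [Satisfaction] makes $\tilde x^{(i)}$ a budget-restricted maximizer of $\tilde u_i$, so by the reduction $\xxi$ maximizes $u_i(x',\sum_jp_jx'_j)$ over all budget-feasible bundles, and since $Z$ is large enough that every bundle is affordable at these prices, $\xxi$ lies in the auction's demand set. For [Market clearance] (and [Feasibility]) of the items: the seller's budget binds, so from the hypothesis $\tilde x^{(\numagents)}_{\numgoods}=\sum_jp_j$ one deduces $\sum_{j\le\numitems}p_j\tilde x^{(\numagents)}_j=0$, hence the seller holds nothing of any item with $p_j>0$; then GME [Market clearance] forces $\sum_{i\le\numbuyers}\xxi_j=1$ for every such $j$, while $\sum_{i\le\numbuyers}\xxi_j\le 1$ for all $j$ follows from feasibility of the GME allocation. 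Thus $(\{\xxi\},\{p_j\})$ meets every condition of Definition~\ref{def:WE}.

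I expect the main obstacle to be the bookkeeping around $Z$ rather than anything structural: the auction's [Satisfaction] is an unconstrained maximum over all $2^{\numitems}$ bundles, while a market buyer is limited by her budget $\frac{Z}{\numbuyers}\tilde p_{\numgoods}$, and the two coincide only when no bundle a buyer could prefer is unaffordable; this is why the money endowment $\frac{Z}{\numbuyers}$ (equivalently $Z$) must dominate the equilibrium prices, the point of choosing $Z>\sum_i u_i(\mathbf{1},0)$, and I would need to argue that this choice keeps all buyers' budget constraints slack at any WE and at any GME of the constructed market. A secondary, cosmetic issue is to adopt the standard convention that a GME allocation is feasible (the definition as stated lists only the complementary-slackness form of market clearance) and to let the seller --- whose utility is exactly money, hence indifferent among items --- absorb unsold zero-priced units; neither point affects the substance of the argument.
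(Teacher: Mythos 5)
Your proposal follows essentially the same route as the paper's proof: the identical forward construction (buyers keep their WE bundles plus residual money $\frac{Z}{\numbuyers}-\sum_j p_j\xxi_j$, the seller absorbs all payments, money priced at $1$), and in both directions the same chain of inequalities driven by the binding budget constraint and monotonicity of $\tilde u_i$ in the money coordinate. The two points you flag --- the seller's holdings of zero-priced items and the role of $Z$ in keeping buyers' budgets slack --- are treated no more carefully in the paper's own proof, so your argument matches it in substance and level of rigor.
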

\begin{proof}
We first show that given a WE $(\{\xxi\}_{i=1}^\numbuyers,\{p_j\}_{j=1}^{\numitems})$ for the combinatorial auction, we can generate a GME for its corresponding \arrowdeb that satisfies desired conditions. For every $j\in [\numgoods-1]$, let $\tilde{p}_j=p_j$ and let $\tilde{p}_\numgoods=1$.  For all $(i,j)\in [\numagents-1]\times [\numgoods-1]$ let $\tilde{x}^{(i)}_j=\xxi_j$. Let the seller collect all trade money, i.e. $\tilde{x}^{(\numagents)}_{\numgoods}=\sum_{j=1}^{\numitems}p_j$ and $\tilde{x}^{(\numagents)}_{j}=0, j\in[\numgoods-1]$. Finally, let $\tilde{x}^{(i)}_{\numgoods}=\frac{Z}{\numbuyers}-\sum_{j=1}^{\numgoods-1}\xxi_jp_j~,~i\in [\numagents-1]$. 
It is easy to show that market clears, because for $j\in[\numgoods-1]$, if $\tilde{p}_j>0$, then $p_j>0$ and hence the item $j$ gets allocated in the combinatorial auction. Consequently, 
\begin{equation}
\sum_{i=1}^{\numagents}\tilde{x}^{(i)}_j=\sum_{i=1}^{\numbuyers}\xxi_j=1=\sum_{i=1}^{\numagents} \wwi_j
\end{equation}
 Moreover, for money item (for which $\tilde{p}_{\numgoods}>0$), we have
\begin{equation}
\sum_{i=1}^{\numagents}\tilde{x}^{(i)}_{\numgoods}=
Z-\sum_{i=1}^{\numagents-1}\sum_{j=1}^{\numitems}\tilde{x}^{(i)}_j\tilde{p}_j+\sum_{j=1}^{\numitems}p_j
=Z+\sum_{j=1}^{\numitems}p_j(1-\sum_{i=1}^{\numbuyers}\xxi_j)\overset{(1)}{=}\sum_{i=1}^{N}\wwi_{\numgoods}
\end{equation}
where (1) is true, because if $p_j>0$, $\sum_{i=1}^{\numbuyers}\xxi_j=1$, and $\sum_{i=1}^{N}\wwi_{\numgoods}=Z$. To show the satisfaction condition, it is obvious that the seller gets its optimal allocation under budget constraints (it sells all the items and gets the money for it). For an agent $i\in[\numagents-1]$, fix any feasible allocation $x'$ of commodities to this agent. Now, we have:
\begin{align}
\tilde{u}_i(\tilde{x}^{(i)})&=u_i(\{\xxi_j\}_{j\in[\numitems]},\frac{Z}{\numbuyers}-\tilde{x}^{(i)}_{\numgoods})=u_i(\{\xxi_j\}_{j\in[\numitems]},\sum_{j=1}^{\numitems}\xxi_jp_j)\geq u_i(\{x'_j\}_{j\in[\numitems]},\sum_{j=1}^{\numitems}x'_jp_j)\nonumber\\
&= \tilde{u}_i(\{x'_j\}_{j\in[\numitems]},\frac{Z}{\numbuyers}-\sum_{j=1}^{\numitems}x'_jp_j)\overset{(1)}{\geq} \tilde{u}_i(\{x'_j\}_{j\in[\numitems]},x'_{\numgoods}),
\end{align}
where (1) holds, because utilities in the market are increasing with respect to money commodity and for a feasible allocation $x'$, $\sum_{j=1}^{\numitems}x'_j p_j+x'_{\numgoods}=\sum_{j=1}^{\numgoods-1}x'_j\tilde{p}_j+x'_{\numgoods}\leq\frac{Z}{\numbuyers}$. It is also true that $\tilde{x}^{(i)}$ meets the budget constraint, i.e. $\sum_{j=1}^{\numgoods-1}\tilde{x}^{(i)}_j\tilde{p}_j+\tilde{x}^{(i)}_{\numgoods}\leq\frac{Z}{\numbuyers}$, which completes the proof. 

To prove the other direction, suppose a GME $(\{\tilde{x}^{(i)}\}_{i=1}^\numagents,\{\tilde{p}_j\}_{j=1}^{\numgoods})$ is given and it satisfies the conditions in the statement of the lemma. Now $\forall (i,j)\in [\numbuyers]\times [\numitems]$ let $\xxi_j=\tilde{x}^{(i)}_j$. Moreover, let  $p_j=\frac{\tilde{p}_j}{\tilde{p}_\numgoods}, j\in[\numitems]$. Clearly the allocation $\{\xxi\}_{i\in [\numbuyers]}$ is a feasible allocation and clears the market due to the fact that $\{\tilde{x}^{(i)}\}_{i\in [\numagents]}$ clears the \arrowdeb and $\tilde{x}^{(\numagents)}_j=0$ in any GME (because seller's utility is strictly increasing with money and she gets zero utility by receiving any commodity $j\in[\numgoods-1]$). To see satisfaction, , fix an allocation of items to bidder $i$, e.g. $x'$, in the combinatorial auction. Now we have
\begin{align}
u_i(\{\xxi_j\}_{j\in[\numitems]}, \sum_{j=1}^{\numitems}\xxi_j p_j)&=\tilde{u}_i(\{\tilde{x}^{(i)}_j\}_{j\in[\numitems]}, \frac{Z}{\numbuyers}-
\sum_{j=1}^{\numitems}\tilde{x}^{(i)}_j \frac{\tilde{p}_j}{\tilde{p}_{\numgoods}})\overset{(1)}{\geq}\tilde{u}_i(\{\tilde{x}^{(i)}_j\}_{j\in[\numitems]},\tilde{x}^{(i)}_{\numgoods})\nonumber\\
&\overset{(2)}{\geq}\tilde{u}_i(\{x'_j\}_{j\in[\numitems]},\frac{Z}{\numbuyers}-
\sum_{j=1}^{\numitems}x'_j \frac{\tilde{p}_j}{\tilde{p}_{\numgoods}})=u_i(\{x'_j\}_{j\in[\numitems]}, \sum_{j=1}^{\numitems}x'_j p_j)
\end{align}
where (1) holds because $\sum_{j=1}^{\numgoods-1}\tilde{x}^{(i)}_j\tilde{p_j}+
\tilde{x}^{(i)}_{\numgoods}\tilde{p}_{\numgoods}\leq\tilde{p}_{\numgoods}\frac{Z}{n}$, and (2) holds because $(\{x'_j\}_{j\in[\numitems]},\frac{Z}{\numbuyers}-
\sum_{j=1}^{\numitems}x'_j \frac{\tilde{p}_j}{\tilde{p}_{\numgoods}})$ meets the budget constraint of bidder $i$ in \arrowdeb, or more concretely:
\begin{equation}
\sum_{j=1}^{\numgoods-1}x'_j\tilde{p}_j+
(\frac{Z}{\numbuyers}-
\sum_{j=1}^{\numitems}x'_j \frac{\tilde{p}_j}{\tilde{p}_{\numgoods}})\tilde{p}_{\numgoods}=\tilde{p}_{\numgoods}\frac{Z}{n}=\sum_{j=1}^{\numgoods}\wwi_j\tilde{p}_j
\end{equation}
\end{proof}
\section{A Generalization of configuration LP}
In this section, we start exploring the connections between welfare maximization and WE for non-quasilinear utilities in combinatorial auctions. In the quasilinear world, where $u_i(x,p)=v_i(x)-p$, the following connections are known:
\begin{itemize}
\item The combinatorial auction configuration LP, i.e. the following linear program 
\begin{equation*}
\label{eq:conflpq}
\begin{array}{ll@{}ll}
\text{maximize} & \displaystyle \sum_{i\in\buyers,S\subseteq \items}x_{i,S}v_i(\mathbf{1}_S)\\
\text{subject to}& ~~~\displaystyle\sum_{S\subseteq \items}x_{i,S}\leq 1, &~~~~i\in\buyers.\\
               &~~~\displaystyle\sum_{i\in \buyers}\sum_{S\subseteq \items:j\in S}x_{i,S}\leq 1,&~~~~j\in\items.\\
         &~~~x_{i,S}\geq 0,&~~~~i\in\buyers , S\subseteq \items.  
\end{array}
\end{equation*}
that characterizes maximum welfare allocations, has an integral optimal solution if and only if WE exists. 
\item A vector of prices is a WE price vector if it forms an optimal solution to the dual of the configuration LP. Moreover, if a dual solution can be supported by an integral feasible primal, then it is a WE price vector.
\end{itemize}
The question we address here is how can one generalize these concepts to the case of non-quasilinear utilities, in the hope that they shed some insights on existence and structural properties of WE for non-quasilinear utilities. To this end, we first define an \emph{equivalent quasilinear value function} for each bidder, which will act similar to the value function in the quasilinear configuration LP. 
\begin{defn}
Fix a vector of prices $p^*$. For bidder $i$ with utility function $u_i(x,p)$, the equivalent quasilinear value function at price vector $p^*$ is defined as
\begin{equation}
v^{(p^*)}_i(x)\triangleq u_i(x,\sum_{j=1}^{\numitems}x_jp^*_j)+\sum_{j=1}^{\numitems}p^*_j
\end{equation}
\end{defn}

As it can be seen from the definition, the equivalent quasi-linear value function, together with prices $p^*$, will generate the same utility as the original utility function, if we assume quasi-linearity. Given the definition of an equivalent quasilinear value function for each bidder at a fixed price vector $p^*$, here is a natural generalization to the configuration LP. The program maximizes welfare with respect to the equivalent quasilinear value function.
\begin{defn}[Induced configuration LP at price $p^*$]
\label{def:confLP}
Fixing a price vector $p^*$, the \emph{induced configuration LP at price $p^*$} is defined as the following linear program with variables $\{x_{i,S}\}_{i\in\buyers,S\subseteq\items}$ (allocation):
\begin{equation*}
\label{eq:conflp}
\begin{array}{ll@{}ll}
\text{maximize} & \displaystyle \sum_{i\in\buyers,S\subseteq \items}x_{i,S}v^{(p^*)}_i(\mathbf{1}_S)\\
\text{subject to}& ~~~\displaystyle\sum_{S\subseteq \items}x_{i,S}\leq 1, &~~~~i\in\buyers.\\
               &~~~\displaystyle\sum_{i\in \buyers}\sum_{S\subseteq \items:j\in S}x_{i,S}\leq 1,&~~~~j\in\items.\\
         &~~~x_{i,S}\geq 0,&~~~~i\in\buyers , S\subseteq \items.  
\end{array}
\end{equation*}
\end{defn}
Similar to the quasilinear utilities, one can look at the dual program of the linear program in Definition~\ref{def:confLP} which sheds more insights on the structure of the WE, as we show later in this paper.
\begin{defn}[Dual induced configuration LP at price $p^*$]
\label{def:dualconfLP}
Fixing a price vector $p^*$, the dual of the induced configuration LP in Definition~\ref{def:confLP} is the following linear program with variable $\{u_i\}_{i\in\buyers}$(utilities) and $\{p_j\}_{j\in \items}$(prices).
\begin{equation*}
\label{eq:dualconflp}
\begin{array}{ll@{}ll}
\text{minimize} & \displaystyle \sum_{i\in\buyers}u_i+\sum_{j\in\items}p_j\\
\text{subject to}& ~~~\displaystyle \sum_{j\in S}p_j+u_i\geq v^{(p^*)}_i(\mathbf{1}_S) , &~~~~i\in\buyers, S\subseteq \items.\\
         &~~~u_i\geq 0, p_j\geq 0,&~~~~i\in\buyers , j\in \items.  
\end{array}
\end{equation*}
\end{defn}
In the next section we show how the linear programs in Definitions~\ref{def:confLP} and \ref{def:dualconfLP} are related to the existence of WE in non-quasilinear settings. 

\section{main results and their applications}
Our main result is proving the existence of WE under necessary and sufficient structural conditions, and bridging the gap between the concept of WE and configuration LP for non-quasilinear utilities. More accurately, we show the induced configuration LP in Definition~\ref{def:confLP} is strong enough to provide us with necessary and sufficient conditions for the existence of equilibrium, however we have to look at this LP when $p^*$ is also an equilibrium price vector. Using the reduction in Section~\ref{def:red_market} to general markets, we show such item prices always exist. Moreover, it turned out that by using the primal-dual LP machinery one can show such prices will get supported by an integral allocation to form a WE if and only if the corresponding induced configuration LP has an integral optimal solution.

\begin{defn}
\label{def:marketmap}
Fix a combinatorial auction $( \items,\buyers,\{u_i(.)\}_{i\in\buyers})$ and consider its corresponding \arrowdeb $(\goods,\agents,\{\tilde{u}_i(.)\}_{i\in \agents}, \{\wwi\}_{i\in\agents})$, as defined in Section~\ref{def:red_market}. The \emph{market correspondence} $\phi$ is defined as follows.
\begin{equation}
\forall (p,d)\in \mathbb{R}_{+}^{\numgoods}\times\mathbb{R}_{+}^{\numgoods}: \phi(p,d)=(\tilde{D}(p),\mathcal{F}(d)),
\end{equation}
where $\mathcal{F}(d)\triangleq \underset{\hat{p}\in \mathbb{R}_{+}^{\numgoods}}{\textrm{argmax}}~ \hat{p}.(d-(1,1,\ldots,1,Z))$ and $\tilde{D}(p)$ is the convex hull of total demand set $D(p)$. We say a point $(p,d)$ is a \emph{fixed point} of the market correspondence if
\begin{equation}
(p,d)\in \phi(p,d)=(\tilde{D}(p)),\mathcal{F}(d)).
\end{equation}
\end{defn}

As we will show later, the fixed point of market correspondence $\phi$ defined in Definition~\ref{def:marketmap}  always exists, under monotonicity assumptions on utility functions. This fixed point is essentially giving us an equilibrium price vector that can be supported by a \emph{fractional} allocation of items to buyers in a way that it produces an envy-free market clearing outcome ( i.e. an outcome that everyone gets an optimal allocation under prices and market clears). However, we expect integral allocations in a WE of the combinatorial auction. To address this we utilize the induced configuration LP defined in Definition~\ref{def:confLP} and its dual in Definition~\ref{def:dualconfLP} at the fixed point price to see if the supporting fractional allocation can basically be decomposed into integral allocations. This helps us to find a structural characterization for WE. Putting all the pieces together, we get two main results.

\begin{thm} 
\label{thm:fracWE}
Given a combinatorial auction $( \items,\buyers,\{u_i(.)\}_{i\in\buyers})$  in which for every buyer $i$ the utility function $u_i(x,p)$ is increasing with respect to allocation of items, and strictly decreasing and continuous with respect to the money, a \emph{fractional} WE (Definition~\ref{def:fracWE}) always exists.
\end{thm}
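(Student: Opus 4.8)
The plan is to prove existence of a fractional WE by exhibiting a fixed point of the market correspondence $\phi$ from Definition~\ref{def:marketmap}, translating it back to the combinatorial auction via Lemma~\ref{lem:reduction}, and then decomposing the aggregate (convexified) demand at the fixed point into per-buyer distributions over demanded bundles. First I would verify that $\phi$ satisfies the hypotheses of Kakutani's fixed point theorem (restricted to a suitable compact convex domain): one restricts prices to the simplex $\Delta^{\numgoods-1}$ (prices are only defined up to scale, and the fixed point will have $\tilde p_{\numgoods}>0$ by the strict monotonicity of the seller's and buyers' utilities in money), and restricts $d$ to a large compact box since total demand is bounded (indivisible goods have supply $1$, and money demand is capped via the budget constraint and the large constant $Z$). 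On this domain $\tilde D(p) = \mathrm{Conv}(D(p))$ is nonempty, convex, and compact-valued, and upper hemicontinuity of $D(\cdot)$ follows from continuity of $\tilde u_i$ in money together with the fact that the budget set varies continuously in $p$ (Berge's maximum theorem, with strict monotonicity in money pinning down the money coordinate uniquely given the indivisible allocation); taking convex hulls preserves upper hemicontinuity. The map $\mathcal{F}(d) = \argmax_{\hat p \ge 0}\ \hat p \cdot (d - (1,\dots,1,Z))$ is the standard Walrasian-tâtonnement ``excess demand punisher'' and is upper hemicontinuous with convex values; as usual one first confines it to the price simplex so the argmax is over a compact set and nonempty. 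Hence Kakutani applies and a fixed point $(p^*,d^*)$ exists.

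Next I would extract the equilibrium content of the fixed point. Since $d^* \in \tilde D(p^*)$, aggregate demand is feasible: the fixed-point condition on $\mathcal{F}$ forces $d^*_j \le 1$ for $j<\numgoods$ and $d^*_{\numgoods}\le Z$ (otherwise $\mathcal F$ would blow up a price), and Walras' law — obtained by summing the individual budget constraints, which bind by strict monotonicity — forces $p^*\cdot(d^* - w) = 0$ where $w=(1,\dots,1,Z)$, so complementary slackness gives market clearing: $p^*_j>0 \Rightarrow d^*_j = w_j$. Together with $d^*\in \mathrm{Conv}(\sum_i D_i(p^*))$ this is exactly a GME in the convexified sense, and in particular $\tilde p^*_{\numgoods}>0$ because otherwise every buyer would demand unbounded money. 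Lemma~\ref{lem:reduction} then converts the (convexified) GME into item prices $p_j = \tilde p^*_j/\tilde p^*_{\numgoods}$ for the combinatorial auction together with a convex-combination demand that clears the market in expectation.

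The last step — and the one I expect to be the real obstacle — is the \emph{decomposition}: I need to write $d^* = \sum_{i\in\buyers} d^{(i)}$ where each $d^{(i)}\in \tilde D_i(p^*)$, and then express each $d^{(i)}$ as a convex combination $\sum_S x_{i,S}\mathbf 1_S$ supported only on bundles $S$ that are utility-maximizing for buyer $i$ at the induced money price, so that the satisfaction condition of Definition~\ref{def:fracWE} holds bundle-by-bundle. The per-agent splitting $d^* = \sum_i d^{(i)}$, $d^{(i)}\in\tilde D_i$, is available because $\tilde D = \sum_i \tilde D_i$ (Minkowski sum of convex hulls, as noted at the end of Section 2). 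Then, since each $\tilde D_i = \mathrm{Conv}(D_i)$ is a polytope whose vertices are genuine demanded allocations, Carathéodory's theorem gives the desired convex representation $d^{(i)} = \sum_S x_{i,S}\mathbf 1_S$ with $\sum_S x_{i,S}=1$ and each $\mathbf 1_S$ a demanded bundle; dropping the money coordinate (Lemma~\ref{lem:reduction} ensures the money coordinate of a demanded allocation is exactly the budget slack $\frac{Z}{\numbuyers}-\sum_j \mathbf 1_{S,j}p_j$, so demanded-in-the-market bundles are precisely the $u_i$-maximizing bundles at price $\sum_j \mathbf 1_{S,j}p_j$) yields exactly the satisfaction and feasibility-in-expectation conditions. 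Feasibility $\sum_{i,S:j\in S}x_{i,S} = d^*_j \le 1$ and the clearing condition $p_j>0 \Rightarrow d^*_j=1$ then transfer directly. The subtle points to get right are: (i) confirming the fixed point has strictly positive money price (needed both for the reduction and to rule out degenerate blow-ups), which uses strict monotonicity in money essentially; and (ii) checking that the convex-hull/Minkowski-sum manipulations commute with ``deleting the last coordinate'' so that the support of each $x_{i,\cdot}$ really consists of demanded sets and not spurious ones — this is where the uniqueness of the money coordinate given the indivisible part does the work.
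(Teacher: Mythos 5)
Your proposal is correct and follows essentially the same route as the paper's proof: a Kakutani fixed point of the market correspondence $\phi$, Walras' law plus the definition of $\mathcal{F}$ to obtain feasibility ($d^*_j\le 1$), a strictly positive money price, and market clearing, and then the decomposition of $d^*$ via $\tilde{D}=\sum_i \tilde{D}_i$ into per-buyer convex combinations supported on demanded bundles, translated back to the auction by the argument of Lemma~\ref{lem:reduction}. The only difference is that you spell out the compactness and upper-hemicontinuity verification for Kakutani, which the paper explicitly omits as routine.
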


\begin{thm}
\label{thm:intWE}
Given a combinatorial auction $( \items,\buyers,\{u_i(.)\}_{i\in\buyers})$ that satisfies conditions in Theorem~\ref{thm:fracWE}, and its corresponding \arrowdeb $(\goods,\agents,\{\tilde{u}_i(.)\}_{i\in \agents}, \{\wwi\}_{i\in\agents})$ as in Definition~\ref{def:red_market}, a pair of prices and allocation $(\{p_j\}_{j\in \items},\{\xxi\}_{i\in\buyers})$ is a WE if and only if:
\begin{itemize}
\item $\exists~\tilde{p}\in \mathbb{R}_+^{\numgoods}$ and $\tilde{d}\in\mathbb{R}_+^{\numgoods}$ s.t. $\tilde{p}_{\numgoods}>0$, $\tilde{d}_{\numgoods}=Z$, $j\in [\numitems]:p_j=\frac{\tilde{p}_j}{\tilde{p}_{\numgoods}}$ and $(\tilde{p},\tilde{d})$ is a fixed point of $\phi$. 
\item $\{x_{i,S}\}$ is an optimal integral solution for the induced configuration LP at prices $p^*=p$, where $\forall i\in\buyers, S\subseteq \items: x_{i,S}=1\iff j\in S:\xxi_j=1$.
\end{itemize}
\end{thm}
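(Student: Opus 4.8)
The plan is to chain together the reduction of Lemma~\ref{lem:reduction} with standard primal–dual (complementary slackness) reasoning applied to the induced configuration LP, while being careful that the price vector $p^*$ at which the induced LP is defined is exactly the price vector that appears in the candidate WE. First I would establish the ``easy'' bookkeeping direction of the fixed-point condition: using Lemma~\ref{lem:reduction}, a WE $(\{\xxi\},\{p_j\})$ of the combinatorial auction corresponds to a GME $(\{\tilde x^{(i)}\},\{\tilde p_j\})$ of the associated \arrowdeb\ with $\tilde p_{\numgoods}>0$ and $p_j=\tilde p_j/\tilde p_{\numgoods}$. Then I would check that such a GME is precisely a fixed point of the market correspondence $\phi$: market clearance for the divisible good forces $\sum_i\tilde x^{(i)}_{\numgoods}=Z$, so $\tilde d=(1,\dots,1,Z)$ works and $\tilde p\in\mathcal F(\tilde d)$ holds vacuously (the objective $\hat p\cdot(\tilde d-(1,\dots,1,Z))$ is identically zero); and the satisfaction/clearance conditions say that $\tilde d=\sum_i\tilde x^{(i)}$ lies in $\tilde D(\tilde p)$ with the market-clearing supply. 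Conversely, a fixed point with $\tilde p_{\numgoods}>0$ and $\tilde d_{\numgoods}=Z$ unwinds, via the decomposition $\tilde D=\sum_i\tilde D_i$ and the argument in Theorem~\ref{thm:fracWE}, to a \emph{fractional} WE at prices $p=\tilde p/\tilde p_{\numgoods}$. So after this step the content reduces to: \emph{given that $p$ supports a fractional WE, an integral allocation $\{\xxi\}$ is part of an (integral) WE at prices $p$ iff the corresponding $\{x_{i,S}\}$ is an integral optimum of the induced configuration LP at $p^*=p$.}

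Next I would prove that reduction. The key observation is the identity $v^{(p)}_i(\mathbf 1_S)=u_i(\mathbf 1_S,\sum_{j\in S}p_j)+\sum_j p_j$, so that for a \emph{fixed} price vector $p=p^*$ the induced LP is literally the quasilinear configuration LP of the surrogate valuations $v^{(p)}_i$, and the satisfaction condition ``$\mathbf 1_S\in\argmax_{x'}u_i(x',\sum_j p_j x'_j)$'' is equivalent to ``$S\in\argmax_{T}\big(v^{(p)}_i(\mathbf 1_T)-\sum_{j\in T}p_j\big)$''. I would then run the textbook Walrasian/LP-duality argument in this surrogate world: take the dual solution $u_i:=v^{(p)}_i(\mathbf 1_{S_i})-\sum_{j\in S_i}p_j$ (where $S_i$ is the bundle $\xxi$ selects) together with the prices $p_j$; feasibility of this dual point is exactly the satisfaction condition, and complementary slackness between this dual point and the integral primal $\{x_{i,S}\}$ amounts to (i) every bidder with $x_{i,S}=1$ having $S$ a demanded set — satisfaction again — and (ii) every item with $p_j>0$ being fully allocated — market clearance. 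Hence $\{x_{i,S}\}$ integral is primal-optimal and $(u,p)$ dual-optimal iff $(\{\xxi\},\{p_j\})$ is a WE; in particular an integral optimum exists iff there is an integral WE allocation supported by these very prices. The only subtlety to flag is that we must know a priori that $p$ is an equilibrium price vector (i.e. supports the fractional WE) so that the dual optimum value is attained and the primal LP is feasible with the right value — this is guaranteed by the first bullet via Theorem~\ref{thm:fracWE}, and it is precisely the place where the theorem would be false without that hypothesis (an arbitrary $p^*$ need not make the surrogate dual optimum coincide with a market-clearing configuration).

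Finally I would assemble the two directions. For ``$\Rightarrow$'': from a WE, Lemma~\ref{lem:reduction} gives the GME, which I showed is a fixed point of $\phi$ with $\tilde d_{\numgoods}=Z$ and $p_j=\tilde p_j/\tilde p_{\numgoods}$ (first bullet), and the complementary-slackness computation shows the induced $\{x_{i,S}\}$ is an integral optimum of the induced LP at $p^*=p$ (second bullet). For ``$\Leftarrow$'': the fixed point with $\tilde p_{\numgoods}>0$ certifies (via Theorem~\ref{thm:fracWE} and Lemma~\ref{lem:reduction}) that $p$ is an equilibrium price vector; then an integral optimum $\{x_{i,S}\}$ of the induced LP at $p^*=p$, together with the dual optimum, satisfies complementary slackness, which I translate back into the Feasibility, Satisfaction, and Market-clearance conditions of Definition~\ref{def:WE}, so $(\{\xxi\},\{p_j\})$ with $\xxi_j=1\iff j\in S,\ x_{i,S}=1$ is a genuine WE. I expect the main obstacle to be the bookkeeping in identifying a GME/fixed point of $\phi$ with a WE cleanly — in particular verifying that the $\mathcal F(d)$ component of the fixed-point condition carries no information beyond $\tilde d_{\numgoods}=Z$ and the nonnegativity/clearing of the other coordinates, and matching the normalization $\tilde d=(1,\dots,1,Z)$ to the supply vector; the LP-duality half is essentially the classical argument transplanted to the surrogate valuations $v^{(p)}_i$ and should go through routinely once the reduction to a fixed price is in place.
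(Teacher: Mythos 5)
Your proposal follows essentially the same route as the paper: reduce the first bullet to the existence of a fractional WE at prices $p=\tilde p/\tilde p_{\numgoods}$ (via the market correspondence and Theorem~\ref{thm:fracWE}), take the dual point $u_i=\max_{S}u_i(\mathbf 1_S,\sum_{j\in S}p_j)$ together with $p$, certify its optimality for the dual of the induced configuration LP using the fractional allocation, and then translate complementary slackness between an integral primal optimum and this dual optimum into the Satisfaction and Market-clearance conditions of Definition~\ref{def:WE} (and back). You also correctly identify the one essential subtlety, namely that the duality argument only works because $p$ is already a fractional equilibrium price vector.

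One concrete slip in the forward direction: you cannot in general take $\tilde d=(1,\dots,1,Z)$ and declare $\tilde p\in\mathcal F(\tilde d)$ ``vacuous.'' The seller in the corresponding market does not demand items back, so if a zero-priced item is unallocated in the WE then $(1,\dots,1,Z)$ is not in $\tilde D(\tilde p)$ and cannot serve as the demand coordinate of the fixed point. The paper instead sets $\tilde d_j=\sum_{i,S:j\in S}x_{i,S}\le 1$ and $\tilde d_{\numgoods}=Z$, and then membership $\tilde p\in\mathcal F(\tilde d)$ is exactly the complementarity condition $\tilde p_j>0\Rightarrow\tilde d_j=1$, i.e.\ market clearance --- it carries real information rather than being automatic. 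This is easily repaired and does not affect the rest of your argument.
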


While our main results in Theorems~\ref{thm:fracWE} and Theorem~\ref{thm:intWE} characterize structural necessary and sufficient conditions for the existence of competitive equilibria, there are simple corollaries of these theorems that are of interest. The first corollary, whose proof can be seen from the proofs of Theorem~\ref{thm:fracWE} and \ref{thm:intWE} in Section~\ref{sec:proofs}, states the relationship between the dual of induced configuration LP at some price $p^*$ and prices in a competitive equilibrium (either fractional or integral). 
\begin{cor}
\label{cor:dual}
For a combinatorial auction  $( \items,\buyers,\{u_i(.)\}_{i\in\buyers})$ and its corresponding \arrowdeb $(\goods,\agents,\{\tilde{u}_i(.)\}_{i\in \agents}, \{\wwi\}_{i\in\agents})$, these statements are equivalent:
\begin{itemize}
\item Price vector $p$ is a fractional WE price vector, as in Definition~\ref{def:fracWE}.
\item There exists a price vector $\tilde{p}\in\mathbb{R}_+^\numgoods$ s.t. $(\tilde{p},\tilde{d})$ is a fixed point of the market correspondence $\phi$ and $p_j=\frac{\tilde{p}_j}{\tilde{p}_{\numgoods}}$.
\item Let $u_i=\underset{S\subseteq\items}\max ~v^{(p)}_i(\mathbf{1}_S)-\sum_{j\in S}p_j$.  Then $(u,p)$ is an optimal solution to the dual of induced configuration LP at price $p$.
\end{itemize}
\end{cor}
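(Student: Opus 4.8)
The plan is to establish the equivalences among the three statements, reusing the constructions behind Theorems~\ref{thm:fracWE} and~\ref{thm:intWE}. For the equivalence of the first two statements I would prove the fractional analogue of Lemma~\ref{lem:reduction}: a pair $(\tilde p,\tilde d)$ is a fixed point of $\phi$ exactly when it is a ``fractional GME'' of the corresponding \arrowdeb. Indeed $\tilde d\in\tilde D(\tilde p)=\textrm{Conv}(D(\tilde p))$ says that the aggregate of a per-agent distribution over utility-maximizing, budget-feasible bundles equals $\tilde d$ (using $\tilde D=\sum_i\tilde D_i$), while $\tilde p\in\mathcal F(\tilde d)$ is bounded only if $\tilde d\le(1,\dots,1,Z)$ coordinatewise and then forces $\tilde p_j=0$ on every under-demanded coordinate; since every agent strictly prefers more money, any such fixed point has $\tilde p_{\numgoods}>0$ and, because each agent spends her whole budget, $\tilde d_{\numgoods}=Z$. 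Normalizing $p_j=\tilde p_j/\tilde p_{\numgoods}$, discarding the seller (whose demand involves only money and therefore perturbs neither item feasibility nor item clearing), and applying the same utility-equivalence identity as in Lemma~\ref{lem:reduction}, the distribution agent $i$ places on bundles maximizing $\tilde u_i$ becomes a distribution on bundles $S$ maximizing $u_i(\mathbf{1}_S,\sum_{j\in S}p_j)$; one checks the feasibility and clearing conditions of Definition~\ref{def:fracWE} transfer, and running the construction backwards gives the converse. One direction of this is exactly what the existence part of the proof of Theorem~\ref{thm:fracWE} does.

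For the equivalence with the third statement I would argue by linear programming duality at $p^*=p$. The choice $u_i=\max_{S\subseteq\items}\bigl(v^{(p)}_i(\mathbf{1}_S)-\sum_{j\in S}p_j\bigr)$ is the pointwise-smallest vector making $(u,p)$ feasible for the dual of the induced configuration LP (Definition~\ref{def:dualconfLP}); nonnegativity of $u_i$ holds under the harmless normalization $u_i(\emptyset,0)\ge 0$. By construction of the equivalent quasilinear value function, at prices $p$ the maximizers $\argmax_{S}\bigl(v^{(p)}_i(\mathbf{1}_S)-\sum_{j\in S}p_j\bigr)$ are precisely bidder $i$'s demanded bundles. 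The primal feasible region is a nonempty bounded polytope, so optima are attained and strong duality applies; complementary slackness then says $(u,p)$ is dual-optimal iff some primal-feasible $\{x_{i,S}\}$ is supported on these demanded bundles, clears every item with $p_j>0$, and satisfies $\sum_S x_{i,S}=1$ for every $i$ with $u_i>0$. Such an allocation is a fractional WE supported by $p$ up to the single gap that $\sum_S x_{i,S}$ may be $<1$ when $u_i=0$; but $u_i=0$ squeezes $0\le v^{(p)}_i(\mathbf{1}_{\emptyset})\le u_i=0$, so $\mathbf{1}_{\emptyset}$ is itself a maximizer and we may pad with $x_{i,\emptyset}$ at zero cost to the objective and without consuming any item. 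Hence dual-optimality of $(u,p)$ is equivalent to $p$ supporting a fractional WE, which is statement~(1), completing the equivalences.

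I expect the main obstacle to be the bookkeeping in the first equivalence: aligning the \arrowdeb budget line (where the divisible good enters the constraint) with the ``price baked into the utility'' convention of Definition~\ref{def:fracWE}, confirming that the extra seller agent disturbs neither item feasibility nor item clearing, and splitting a point of $\textrm{Conv}(D(\tilde p))$ into one distribution per agent (legitimate because $\tilde D=\sum_i\tilde D_i$, but requiring care with the Minkowski-sum decomposition). By comparison the duality half is routine once $\argmax_{S}\bigl(v^{(p)}_i(\mathbf{1}_S)-\sum_{j\in S}p_j\bigr)$ has been identified with the native demand correspondence.
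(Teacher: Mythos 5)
Your proposal is correct and follows essentially the same route as the paper, which proves this corollary only implicitly by pointing to the proofs of Theorems~\ref{thm:fracWE} and~\ref{thm:intWE}: the fixed-point-to-fractional-WE construction gives the first two equivalences, and the complementary-slackness argument at $p^*=p$ gives the third. You additionally fill in two details the paper leaves tacit --- the converse construction of a fixed point from a fractional WE (adapted from Part~2 of the proof of Theorem~\ref{thm:intWE}) and the padding with $x_{i,\emptyset}$ when $u_i=0$ --- both of which are handled correctly.
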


Another corollary of our main result is the connection between fractional WE, as in Defintion~\ref{def:fracWE}, and true (integral) WE, as in Definition~\ref{def:WE}. In fact, by directly applying Corollary~\ref{cor:dual} to Theorem~\ref{thm:intWE} one can restate Theorem~\ref{thm:intWE} through the following corollary, which bypasses the relationship to markets and reveals the relationship between fractional and integral WE.
\begin{cor}
\label{cor:alter}
Given a combinatorial auction $( \items,\buyers,\{u_i(.)\}_{i\in\buyers})$ that satisfies conditions in Theorem~\ref{thm:fracWE}, a pair of prices and allocation $(\{p_j\}_{j\in \items},\{\xxi\}_{i\in\buyers})$ is a WE if and only if:
\begin{itemize}
\item $p$ is a price vector of a fractional WE, as in Definition~\ref{def:fracWE}.
\item $\{x_{i,S}\}$ is an optimal integral solution for the induced configuration LP at prices $p^*=p$, where $\forall i\in\buyers, S\subseteq \items: x_{i,S}=1\iff j\in S:\xxi_j=1$.
\end{itemize}
\end{cor}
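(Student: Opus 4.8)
The plan is to obtain Corollary~\ref{cor:alter} as a pure restatement of Theorem~\ref{thm:intWE}, using Corollary~\ref{cor:dual} to translate the ``market'' language of Theorem~\ref{thm:intWE} into the language of fractional Walrasian equilibria. Both statements are biconditionals, so it suffices to match their two right-hand-side conditions one at a time. The second bullet of Theorem~\ref{thm:intWE} --- that $\{x_{i,S}\}$ is an optimal integral solution of the induced configuration LP at $p^*=p$, with $x_{i,S}=1$ precisely on the set selected by $\xxi$ --- is literally identical to the second bullet of Corollary~\ref{cor:alter}, so nothing needs to be done there. The entire content is therefore to show that the first bullet of Theorem~\ref{thm:intWE} --- existence of $\tilde p,\tilde d\in\mathbb{R}_+^{\numgoods}$ with $\tilde p_{\numgoods}>0$, $\tilde d_{\numgoods}=Z$, $p_j=\tilde p_j/\tilde p_{\numgoods}$, and $(\tilde p,\tilde d)$ a fixed point of $\phi$ --- is equivalent to the first bullet of Corollary~\ref{cor:alter}, namely that $p$ is a fractional WE price vector in the sense of Definition~\ref{def:fracWE}.

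That equivalence is exactly the content of the first two bullets of Corollary~\ref{cor:dual}: ``$p$ is a fractional WE price vector'' $\iff$ ``there exists $\tilde p$ with $(\tilde p,\tilde d)$ a fixed point of $\phi$ and $p_j=\tilde p_j/\tilde p_{\numgoods}$''. The only gap to close is bookkeeping about the money coordinate $\tilde d_{\numgoods}$: Theorem~\ref{thm:intWE} carries the extra normalization $\tilde d_{\numgoods}=Z$, while Corollary~\ref{cor:dual} does not state it. I would argue this is not an additional constraint. By Definition~\ref{def:marketmap}, $\mathcal{F}(\tilde d)=\operatorname{argmax}_{\hat p\in\mathbb{R}_+^{\numgoods}}\hat p\cdot(\tilde d-(1,\dots,1,Z))$; if any coordinate of $\tilde d-(1,\dots,1,Z)$ were positive the argmax would be empty and no fixed point could exist, so $\tilde d\le(1,\dots,1,Z)$ coordinatewise and in particular $\tilde d_{\numgoods}\le Z$; and if $\tilde d_{\numgoods}<Z$ then every element of $\mathcal{F}(\tilde d)$ has zero money-price, contradicting $\tilde p_{\numgoods}>0$. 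Hence $\tilde d_{\numgoods}=Z$ automatically at any fixed point with $\tilde p_{\numgoods}>0$, and the two first bullets genuinely coincide.

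Combining the two observations and using that Theorem~\ref{thm:intWE} is an ``if and only if'', $(\{p_j\}_{j\in\items},\{\xxi\}_{i\in\buyers})$ is a WE if and only if both bullets of Corollary~\ref{cor:alter} hold, which is the claim. The main (and only mildly nontrivial) obstacle is the $\tilde d_{\numgoods}=Z$ reconciliation just described; everything else is a direct substitution. I would also remark in passing that, via the third bullet of Corollary~\ref{cor:dual}, the first bullet can equivalently be phrased as ``$(u,p)$ with $u_i=\max_{S\subseteq\items} v^{(p)}_i(\mathbf{1}_S)-\sum_{j\in S}p_j$ is an optimal dual solution of the induced configuration LP at $p$'', which makes transparent why this corollary is the exact non-quasilinear analogue of the classical equivalence between WE and integral optima of the configuration LP.
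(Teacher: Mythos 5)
Your proposal is correct and follows the same route the paper itself indicates: Corollary~\ref{cor:alter} is obtained by substituting the fixed-point characterization from Corollary~\ref{cor:dual} into the first bullet of Theorem~\ref{thm:intWE}, the second bullet being unchanged. Your reconciliation of the normalization $\tilde d_{\numgoods}=Z$ (forced at any fixed point with $\tilde p_{\numgoods}>0$, since otherwise $\mathcal{F}(\tilde d)$ is either empty or contains only vectors with zero money-price) is a detail the paper leaves implicit, and it is argued correctly.
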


The next corollary of our results is the existence of competitive equilibrium for the special case of unit-demand bidders (or matching markets). In this case, each buyer is interested in at most one item and  a feasible allocation is an integral matching. Our result, somehow surprisingly, will give a simple proof for the existence of WE in this setting, which has been observed and proved in the literature  first by Quinzii~\shortcite{quinzii1984core}, and later by Alaei et.~\cite{alaei2011competitive}  and Maskin~\cite{maskin1987fair} - Maskin and Quinzii studied the matching markets with one divisible goods and showed the existence of an envy-free outcome in such a market, while Alaei et al. directly showed that competitive equilibrium exists by a combinatorial proof. 
\begin{cor}
\cite{alaei2011competitive,quinzii1984core,maskin1987fair} In a special case of unit-demand bidders, if the utilities are increasing with respect to the allocation and strictly decreasing and continuous with respect to money, competitive equilibrium always exists.
\end{cor}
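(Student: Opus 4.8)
The plan is to derive the corollary from Theorem~\ref{thm:fracWE} and Corollary~\ref{cor:alter}, together with the classical integrality of the bipartite assignment polytope. A unit-demand utility is, in particular, increasing in the allocation and strictly decreasing and continuous in money, so Theorem~\ref{thm:fracWE} applies and yields a fractional WE; let $p$ be its price vector. By Corollary~\ref{cor:alter}, to produce a (true, integral) WE it then suffices to exhibit an \emph{integral} optimal solution $\{x_{i,S}\}$ of the induced configuration LP at $p^*=p$: reading it off as the allocation $\xxi$ with $\xxi_j=1\iff x_{i,S}=1$ and $j\in S$, the pair $(\{p_j\},\{\xxi\})$ is a WE. So the entire statement reduces to: \emph{in the unit-demand case the induced configuration LP at any fixed price has an integral optimum.}

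First I would argue that for unit-demand bidders the induced configuration LP may be restricted to variables $x_{i,S}$ with $|S|\le 1$. A unit-demand bidder uses at most one item, so bundling extra items into $S$ cannot raise her consumption value, while it can only increase the price term $\sum_{j\in S}p^*_j$ hidden inside $v^{(p^*)}_i(\mathbf{1}_S)$; using monotonicity of $u_i$ in each $x_j$ to dispose of items priced at $0$, one obtains $v^{(p^*)}_i(\mathbf{1}_S)\le\max\bigl\{v^{(p^*)}_i(\mathbf{0}),\ \max_{j\in S}v^{(p^*)}_i(\mathbf{1}_{\{j\}})\bigr\}$ for every $S\subseteq\items$. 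Hence any feasible LP solution can be transported, without decreasing the objective and while preserving feasibility (moving mass onto a single element of $S$ only slackens the item-capacity constraints), to one supported on singletons and the empty set.

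After this reduction the LP is exactly the fractional assignment problem between $\buyers$ and $\items$: variables $x_{i,\{j\}}$ and $x_{i,\emptyset}$, with $\sum_j x_{i,\{j\}}+x_{i,\emptyset}=1$ for each $i\in\buyers$ and $\sum_i x_{i,\{j\}}\le 1$ for each $j\in\items$, maximizing $\sum_{i,j}x_{i,\{j\}}v^{(p)}_i(\mathbf{1}_{\{j\}})+\sum_i x_{i,\emptyset}v^{(p)}_i(\mathbf{0})$. Its constraint matrix is the incidence matrix of a bipartite graph and is totally unimodular, so the bounded feasible polytope has only integral vertices; an optimal vertex is therefore an integral optimum (equivalently, one may pad with dummy items and invoke Birkhoff--von Neumann). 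Feeding this integral optimum back into Corollary~\ref{cor:alter} produces a Walrasian equilibrium, which is the assertion of the corollary; this in turn recovers the existence results of Quinzii, Alaei et al., and Maskin for matching markets with money.

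The step I expect to require the most care is the reduction of the induced configuration LP to a matching LP, i.e.\ pushing LP mass off large sets onto singletons. This is precisely where the exact formalization of ``unit demand'' in the non-quasilinear combinatorial-auction model matters: one must reconcile the standing assumption that $u_i$ is (weakly) increasing in every $x_j$ with the intended meaning that a unit-demand bidder derives no extra value from additional items, and handle items priced at $0$ carefully so that the displayed inequality holds as stated. Everything downstream --- total unimodularity of the assignment polytope and the invocation of Corollary~\ref{cor:alter} --- is routine.
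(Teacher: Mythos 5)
Your proposal is correct and follows essentially the same route as the paper's own proof: invoke Theorem~\ref{thm:fracWE} to obtain fractional WE prices $p$, observe that in the unit-demand case the induced configuration LP at $p$ is an assignment/matching LP whose polytope is integral, and feed the resulting integral optimum into the characterization (the paper uses Corollary~\ref{cor:dual} to conclude the integral primal supports $p$, which is the same content as your use of Corollary~\ref{cor:alter}). The only difference is that you explicitly justify the reduction of the LP from general bundles $S$ to singletons, a step the paper compresses into the one-line assertion that ``the feasible polytope of such LP is the matching polytope.''
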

\begin{proof}
Pick any price vector $p^*$ and look at the induced configuration LP at this price. Interestingly, the feasible polytope of such LP is the matching polytope. We know matching polytope is integral~\cite{schrijver1983short}, and hence there always exists an integral optimal solution to the induced configuration LP at any price vector $p^*$. As we show later in the proof of Theorem~\ref{thm:fracWE}, there always exists a fractional WE price vector  $p$ under the conditions in the statement of the corollary. Now, induced configuration LP at price $p$ has an optimal integral primal solution $x$ that supports any optimal solution $(\hat{p},\hat{u})$ of the dual program ( meaning that for each bidders $i$ the item she gets in $x$ is her preferred item under prices $\hat{p}$, she gets a utility $\hat{u}_i$ and also market clears ). According to Corollary~\ref{cor:dual}, $p$ is also an optimal dual solution for the induced configuration LP at price $p$, and hence $x$ supports $p$. So $(x,p)$ forms a WE for the matching market.
\end{proof}

The last corollary of our result is a simple proof for the classic result of Gul and Stacchetti~\cite{gul1999walrasian}, in which they demonstrate the relationship between competitive equilibria and configuration LP in the case of quasilinear utilities (which is the case when $u_i(x,p)=v_i(x)-p$ for all $i\in\buyers$, where $v_i(\mathbf{1}_S)$ denotes the value of bidder $i$ for bundle $S$). In fact, in the special case of quasilinear, the induced configuration LP at any price $p^*$ will not be a function of $p^*$, as $v_i^{(p^*)}(\mathbf{1}_S)=u_i(\mathbf{1}_S,\sum_{j\in S}p^*_j)+\sum_{j\in S}p^*_j=v_i(\mathbf{1}_S)$. We therefore have the following corollary.
\begin{cor}\cite{gul1999walrasian} Given a combinatorial auction with quasilinear utilities, a WE exists if and only if the configuration LP has an integral optimal solution. 
\end{cor}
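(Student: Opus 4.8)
The plan is to obtain this classical equivalence as a direct instantiation of Corollary~\ref{cor:alter} (equivalently Theorem~\ref{thm:intWE}), exploiting the fact noted just above the statement: for quasilinear utilities $u_i(x,p)=v_i(x)-p$ the equivalent quasilinear value function satisfies $v^{(p^*)}_i(\mathbf{1}_S)=v_i(\mathbf{1}_S)$ for \emph{every} price vector $p^*$. Hence the induced configuration LP of Definition~\ref{def:confLP} at any $p^*$ is literally the combinatorial auction configuration LP with the values $v_i$, and likewise its dual (Definition~\ref{def:dualconfLP}) is the usual dual; neither depends on $p^*$. First I would check that a quasilinear utility meets the hypotheses of Theorem~\ref{thm:fracWE}: it is increasing in the allocation of items whenever each $v_i$ is monotone, and $p\mapsto v_i(x)-p$ is strictly decreasing and continuous in the price. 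Therefore a fractional WE, and in particular a fractional WE price vector $p$, always exists.

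For the ``only if'' direction, suppose $(\{p_j\},\{\xxi\})$ is a WE. By Corollary~\ref{cor:alter}, $p$ is a fractional WE price vector and the associated $\{x_{i,S}\}$ is an optimal \emph{integral} solution of the induced configuration LP at $p^*=p$; since that LP coincides with the configuration LP, the latter has an integral optimal solution. For the ``if'' direction, suppose the configuration LP has an integral optimum $\{x_{i,S}\}$. Take the fractional WE price vector $p$ guaranteed by Theorem~\ref{thm:fracWE}; because the induced configuration LP at $p^*=p$ is exactly the configuration LP, this same $\{x_{i,S}\}$ is an optimal integral solution of the induced LP at $p$. Reading off the integral allocation $\xxi$ with $\xxi_j=1$ exactly for $j$ in the unique set $S$ with $x_{i,S}=1$, Corollary~\ref{cor:alter} now certifies that $(\{p_j\},\{\xxi\})$ is a WE.

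The argument has essentially no moving parts once Theorem~\ref{thm:fracWE} and Corollary~\ref{cor:alter} are in hand; the one point deserving care is why the quasilinear case is special. Corollary~\ref{cor:alter} does not let us certify a WE from an integral optimum of the induced LP at an \emph{arbitrary} price---only at a price that already supports a fractional WE---so in general one cannot decouple ``existence of an integral optimum'' from ``which $p^*$ was chosen.'' It is precisely the $p^*$-independence of $v^{(p^*)}_i$ under quasilinearity that collapses this subtlety: the canonical configuration LP is simultaneously the induced LP at every price, in particular at the fixed-point price $p$, which is exactly what makes the clean ``configuration-LP-only'' statement of Gul and Stacchetti go through. (Alternatively, the same conclusion can be assembled from Theorem~\ref{thm:intWE} together with the reduction in Lemma~\ref{lem:reduction}, but routing through Corollary~\ref{cor:alter} avoids re-deriving the fixed-point correspondence.)
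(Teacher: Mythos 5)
Your proposal is correct and follows essentially the same route as the paper: both arguments observe that quasilinearity makes $v^{(p^*)}_i$ independent of $p^*$, invoke Theorem~\ref{thm:fracWE} to obtain a fractional WE price vector, and then apply Theorem~\ref{thm:intWE}/Corollary~\ref{cor:alter} to reduce existence of a WE to integrality of the (now price-independent) configuration LP. Your write-up is somewhat more explicit about verifying the hypotheses and about why the $p^*$-independence is the crux, but the underlying argument is identical.
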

\begin{proof}
Mixing Theorem~\ref{thm:fracWE} and Corollary~\ref{cor:alter}, we conclude there always exist a price vector $p$ such that it is a factional WE price vector and together with $\{u_i\}$,where $u_i=\underset{S\subseteq\items}\max ~v_i(\mathbf{1}_S)-\sum_{j\in S}p_j$, forms an optimal solution to the dual of configuration LP. So, using Theorem~\ref{thm:intWE} a WE exists if and only if there exists an integral optimal solution for the configuration LP.
\end{proof}

\section{Proof of the main results}
\label{sec:proofs}
\subsection{Proof of Theorem~\ref{thm:fracWE}}

We begin by looking at the market correspondence $\phi$ (Definition~\ref{def:marketmap}). We have the following lemma, whose proof is basically by Kakutani's fixed point theorem~\cite{kakutani1941generalization}. Checking the conditions of this theorem is technical and we omit the details for the sake of brevity. We assert that the proof is similar to the fixed-point proof in \cite{arrow1954existence} or \cite{maskin1987fair} with a minor modification. 
\begin{lem}
\label{lem:fixed}
If for all $i\in \buyers$, $u_i(x,p)$ satisfies the following conditions:
\begin{itemize}
\item  Continuous with respect to $p$,
\item Increasing with respect to $x_j:j\in\items$, 
\item Strictly decreasing with respect to $p$, 
\end{itemize}
then the market correspondence $\phi$ will have a fixed point.
\end{lem}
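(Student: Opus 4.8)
The plan is to verify the hypotheses of Kakutani's fixed point theorem for the correspondence $\phi$ restricted to a suitable compact convex domain. First I would argue that, because of Walras' law and the homogeneity of demand in prices, we may restrict attention to prices lying in the simplex $\Delta = \{p \in \mathbb{R}_+^{\numgoods} : \sum_j p_j = 1\}$ together with a compact box for the demand/endowment vector $d$ (the relevant $d$'s all satisfy $d_{\numgoods} \le Z$ and $d_j \le \numbuyers$ for $j \in [\numitems]$, since in equilibrium total consumption of an item is at most its unit supply and money is conserved). Thus the domain $K = \Delta \times B$ is nonempty, compact, and convex. One subtlety is that $\mathcal{F}(d) = \argmax_{\hat p \in \mathbb{R}_+^{\numgoods}} \hat p \cdot (d - (1,\dots,1,Z))$ is unbounded when $d$ has a coordinate exceeding supply, so I would either clip $\mathcal{F}$ to $\Delta$ or handle the boundary via the standard Arrow--Debreu trick of pricing excess demand; I expect this is exactly the ``minor modification'' alluded to.

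Next I would check the three ingredients Kakutani needs. (i) \emph{Nonemptiness and convexity of values}: $\tilde D(p) = \mathrm{Conv}(D(p))$ is convex by construction, and it is nonempty because each $D_i(p)$ is nonempty — here the monotonicity hypotheses enter, since strict monotonicity in money means that for each choice of the indivisible bundle the budget-feasible money allocation is the unique one exhausting the budget, so the agent's optimization over the finite set of bundles attains its max; the price-best-response set $\mathcal{F}(d)$ is a face of the simplex, hence convex and nonempty. (ii) \emph{Closed graph (upper hemicontinuity)}: for the price component this is Berge's maximum theorem applied to the linear objective $\hat p \cdot (d - \bar w)$ over $\Delta$. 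For the demand component I would invoke upper hemicontinuity of the Walrasian demand correspondence: continuity of $\tilde u_i$ in money (given) plus continuity of the budget constraint in $p$ (valid once we stay in the interior-ish region, i.e. $p_{\numgoods} > 0$, which is where the ``strictly decreasing in money'' / positivity of the money price is used to avoid the degenerate budget set); taking convex hulls preserves upper hemicontinuity. Assembling (i) and (ii), $\phi : K \rightrightarrows K$ is a nonempty-convex-valued correspondence with closed graph, so Kakutani gives a fixed point $(p,d) \in \phi(p,d)$.

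Finally I would note that a fixed point of this form is precisely the desired object: $d \in \tilde D(p)$ means the aggregate (fractional) demand equals $d$, and $p \in \mathcal{F}(d)$ forces $p \cdot (d - \bar w) \le 0$ coordinatewise in the complementary-slackness sense — i.e. markets clear for goods with positive price and are in (weak) excess supply otherwise — which together with Walras' law pins down exact clearing. This is the content the later proof of Theorem~\ref{thm:fracWE} will unpack when it decomposes $d \in \tilde D(p)$ into a per-bidder distribution over demanded bundles; for the present lemma we only need the existence of the fixed point.

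The main obstacle is handling the boundary of the price simplex and the unboundedness of $\mathcal{F}$: one must ensure that the money price does not collapse to zero (otherwise the budget correspondence is not continuous and demand can fail to be upper hemicontinuous) and that the demand vector $d$ stays in a fixed compact set so that Kakutani applies on a compact domain. The standard resolution — truncating $\mathcal{F}$ to the simplex and arguing post hoc that the truncation is not binding at a fixed point because the seller's endowment of money is zero while her demand for money is strictly increasing (so the money price is positive at any fixed point) — is precisely the ``technical'' verification the authors chose to omit, and it is the only place where the argument departs from the textbook Arrow--Debreu / Maskin fixed-point proof.
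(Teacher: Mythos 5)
Your proposal is correct and takes the same route as the paper: the paper's proof of this lemma is a one-line assertion that $\phi$ satisfies the hypotheses of Kakutani's theorem, deferring the "technical" verification to the arguments of Arrow--Debreu and Maskin "with a minor modification." The details you supply --- restricting to the price simplex and a compact box for $d$, truncating the unbounded $\mathcal{F}$, checking nonempty convex values and the closed graph via Berge, and arguing post hoc that the money price is positive at any fixed point --- are exactly the omitted verification (the positivity of $p_{\numgoods}$ is in fact carried out in the paper's proof of Theorem~\ref{thm:fracWE} by the same seller-demands-infinite-money argument you give), so your write-up is a faithful and more complete rendering of the intended proof.
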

\begin{proof}
By the conditions in the statement of the lemma, the correspondence $\phi$ satisfies all the hypotheses of the Kakutani's fixed point theorem in \cite{kakutani1941generalization}, and it therefore has a fixed point.
\end{proof}
Due to Lemma~\ref{lem:fixed}, $\phi$ has a fixed point $(p^*,d^*)$, where $d^*\in \tilde{D}(p^*)=\sum_{i\in\agents}\tilde{D}_i(p^*)$ and
 \begin{equation*}
 p^*\in \mathcal{F}(d^*)=\underset{\hat{p}\in \mathbb{R}_{+}^{\numgoods}}{\textrm{argmax}}~ \hat{p}.(d^*-(1,1,\ldots,1,Z)).
\end{equation*}
 
 By the definition of $\tilde{D}(p^*)$ there exits $\{\tilde{d}^{(i)}\}_{i\in\agents}$ such that $\tilde{d}^{(i)}\in\tilde{D}_i(p^*)=\textrm{Conv}(D_i(p^*))$ and $d^*=\sum_{i\in\agents}\tilde{d}^{(i)}$. For each $i$, $\tilde{d}^{(i)}$ is a convex combination of vectors of the form $[\mathbf{1}_S,\alpha_S],$  each $[\mathbf{1}_S,\alpha_S]\in D_i(p^*)$ and therefor maximizing the utility of agent $i$ subject to the budget constraint under prices $p^*$ in the corresponding \arrowdeb. For each agent $i$, let $\{x_{i,S}\}$ be the coefficients of this convex combination. As a result, for each agent $ i\in[\numbuyers]$ the followings hold.
 \begin{align}
\label{eq:fracWEfeas}\sum_{S\subseteq[\numgoods-1]}&{x_{i,S}}=1,\\
\forall j\in[\numitems]:~~\sum_{S:j\in S}&x_{i,S}=\tilde{d}^{(i)}_j.
 \end{align}
 Now, because 1) at every point in $D_i(p^*)$ each agent satisfies her budget constraint under prices $p^*$,  2) each $\tilde{d}^{(i)}$ is a convex combination of such points, and 3) $d^*=\sum_{i\in\agents} \tilde{d}^{(i)}$, the following holds.
 \begin{equation*}
 p^*.d^*=\sum_{j\in[\numitems+1]}d^*_jp^*_j=\sum_{i\in[\numbuyers+1]}\sum_{j\in [\numitems+1]}\tilde{d}^{(i)}_jp^*_j\leq p^*_{\numitems+1}\sum_{i\in\numbuyers}\frac{Z}{\numbuyers}+\sum_{j\in [\numitems]}p^*_j= p^*.(1,1,\ldots,1,Z).
 \end{equation*}
As a result, $p^*.(d^*-(1,1,\ldots,1,Z))\leq 0$. By the definition of $\mathcal{F}(p^*)$ and due to the fact that at all-zero price vector the dot product is zero, we conclude $p^*.(d^*-(1,1,\ldots,1,Z))= 0$. If $p^*_{\numitems+1}=0$ then the seller in the market, i.e. agent $\numbuyers+1$, will buy infinite amount of money item due to lack of a budget constraint, and hence $d^*_{\numitems+1}>Z$. By definition of $\mathcal{F}(p^*)$, this implies $p^*.(d^*-(1,1,\ldots,1,Z))>0$, a contradiction. So $p^*_{\numitems+1}>0$.  Now if $d^*_j>1$ for some $j\in [\numitems]$, again by definition of $\mathcal{F}(p^*)$,  $p^*.(d^*-(1,1,\ldots,1,Z))>0$ which is a contradiction. So, $d^*_{j}\leq 1$ for all $j\in[\numitems]$. Because utilities are strictly increasing with money in the market, $d^*_{\numitems+1}=Z$. 
 Note that $\tilde{d}^{(\numbuyers+1)}_j=0$ for $j\in[\numitems]$, because seller is not interested in buying back any items. Therefore:
\begin{equation}
\label{eq:fracWEfeas2}
\forall j\in[\numitems]:~~\sum_{i\in[\numbuyers], S:j\in S}x_{i,S}=\sum_{i\in[\numbuyers]}\tilde{d}^{(i)}_j=d^*_{j}\leq 1.
\end{equation}

Finally, let $p_j\triangleq\frac{p^*_j}{p^*_{\numitems+1}}$ (this is possible because $p^*_{\numitems+1}>0$). First of all, we know that if $p_j>0$, then $p^*_j>0$. This implies $d^*_j=1$, as $p^*.(d^*-(1,1,\ldots,Z))=0$ and $\forall j'\in[\numitems]:d^*_{j'}\leq 1$. As a result
\begin{equation}
\label{eq:fracWEclear}
 p_j>0: \sum_{i\in[\numbuyers], S:j\in S}x_{i,S}=d^*_{j}=1
 \end{equation}
We know if $x_{i,S}>0$, then $(\mathbf{1}_S,\alpha_S)\in\D_i(p^*)$ for some $\alpha_S\geq 0$. Similar to the proof of Lemma~\ref{lem:reduction}, $\alpha_S$ is such that $p^*_{\numitems+1}\alpha_S+\sum_{j\in[\numitems]}\mathbf{1}_S(j)p^*_j=\frac{Z}{\numbuyers}$. Hence, similar to what happens in that proof, we can easily verify that $\mathbf{1}_S$ will be a demanded allocation for agent $i$ under prices $p^*$ in the combinatorial auction, as $(\mathbf{1}_S,\alpha_S)$ is a demanded allocation of goods for agent $i$ in its corresponding market. I.e. 
\begin{equation}
\label{eq:fracWEsat}
\mathbf{1}_S\in\underset{x'\in \binaryitems}{\argmax}u_i(x',\sum_{j=1}^{\numitems} p_jx'_j)
\end{equation}

So, $(\{x_{i,S}\}_{i\in\buyers,S\in 2^{\items}},\{p_j\}_{j\in\items})$ will be a fractional WE as in Definition~\ref{def:fracWE}, due to Equations~\ref{eq:fracWEfeas}, \ref{eq:fracWEfeas2}, \ref{eq:fracWEclear}, and \ref{eq:fracWEsat}.\qed

\subsection{Proof of Theorem~\ref{thm:intWE}.} 

\subsubsection{[Part 1, `if' direction]}
Suppose $(\tilde{p},\tilde{d})$ is a fixed point of the correspondence $\phi$ (this fixed point always exists, due to Lemma~\ref{lem:fixed}, and $\tilde{p}_{\numgoods}>0$). Now, following the proof of Theorem~\ref{thm:fracWE}, there exists a fractional WE $(\{\tilde{x}_{i,S}\}_{i\in\buyers,S\subseteq\items},\{p_j\}_{j\in\items})$, as in Definition~\ref{def:fracWE}, such that  $j\in[\numitems]: p_j=\frac{\tilde{p}_j}{\tilde{p}_\numgoods}$. Now fix $p^*=p$ and consider the induced configuration LP at $p^*$. Note that $\{\tilde{x}_{i,S}\}$ is a feasible solution for this LP by the definition of fractional WE. For every $i\in\buyers$ let $u_i=\underset{S\subseteq\items}{\max}~ u_i(\mathbf{1}_S,\sum_{j\in S}p_j)$. Then $(u,p)$ will form a feasible solution for the dual of induced configuration LP at price $p^*=p$, simply because $\forall i\in\buyers: u_i\geq 0$, $\forall j\in\items: p_j\geq 0$ and we have:
\begin{equation}
\forall i\in B, S\subseteq\items: u_i+\sum_{j\in S}p_j\geq u_i(\mathbf{1}_S,\sum_{j\in S}p_j)+\sum_{j\in S}p_j=v^{(p^*)}_i(\mathbf{1}_S)
\end{equation}

We next prove that $(u,p)$ will form an optimal solution for the dual of induced configuration LP at price $p^*=p$, by supporting this feasible dual solution with the feasible primal solution $\{\tilde{x}_{i,S}\}$. This can be done by the method of complementary slackness as following:
\begin{itemize}
\item if $\tilde{x}_{i,S}>0$, then by Definition~\ref{def:fracWE} we have $\mathbf{1}_S\in\underset{x'\in \binaryitems}{\argmax}u_i(x',\sum_{j=1}^{\numitems} p_jx'_j)$. Therefore, 
\begin{equation}
u_i+\sum_{j\in S}p_j=u_i(\mathbf{1}_S,\sum_{j\in S}p_j)+\sum_{j\in S}p_j=v^{(p^*)}_i(\mathbf{1}_S).
\end{equation}
\item if $p_j>0$, then by Definition~\ref{def:fracWE} we have $\sum_{i\in\buyers,S:j\in S}\tilde{x}_{i,S}=\tilde{d}_j=1$.
\item Due to the proof of Theorem~\ref{thm:fracWE}, there always exists at least one point in the convex combination of demanded vectors of buyer $i$, and hence if $u_i>0$ then $\sum_{S\subseteq \items} x_{i,S}=1$.
\end{itemize}
So $(u,p)$ is an optimal dual solution. Now suppose $\{x_{i,S}\}$ be an optimal integral solution to configuration LP at price $p$. Accordingly, $\{x_{i,S}\}$ and $  (u,p)$ should satisfy complementary slackness conditions. These conditions show why $(\{x_{i,S}\},p)$ will form a WE:
\begin{itemize}
\item Proof of satisfaction: due to complementary slackness, if buyer $i$ gets bundles $S$, then $x_{i,S}=1>0$, and therefore:
\begin{equation}
u_i+\sum_{j\in S}p_j=v^{(p^*)}_i(\mathbf{1}_S)\Rightarrow u_i(\mathbf{1}_S,\sum_{j\in S}p_j)=u_i=\underset{S'\subseteq\items}{\max}~u_i(\mathbf{1}_{S'},\sum_{j\in S'}p_j).
\end{equation}
\item Proof of market clearance: due to complementary slackness, if $p_j>0$, then $\displaystyle\sum_{i\in\buyers,S:j\in S}x_{i,S}=1$, as desired.
\end{itemize}
So, putting all pieces together, we conclude that $(x,p)$ is a WE for the combinatorial auction, where $p_j=\frac{\tilde{p}_j}{\tilde{p}_{\numgoods}}$, $(\tilde{p},\tilde{d})$ is a fixed point of the market correspondence $\phi$, as stated in the Theorem~\ref{thm:intWE}, and $\xxi_j=1\iff j\in S,~ x_{i,S}=1$\qed

\subsubsection{[Part 2, `only if' direction]}
Suppose $(\{x_{i,S}\}_{i\in\buyers,S\subseteq\items},\{p_j\}_{j\in\items})$ forms a WE. It is easy to see that $\{x_{i,S}\}$ will be an optimal solution to the induced configuration LP at price $p^*=p$. To see this, let $u_{i}=\underset{S\subseteq\items}{\max}~ u_i(\mathbf{1}_S,\sum_{j\in S}p_j)$. Similar to the proof of Part 1, $(u,p)$ forms a feasible solution for the dual LP of the induced configuration LP at price $p^*=p$. Now, we claim $\{x_{i,S}\}$ together with $(u,p)$ satisfy complementary slackness conditions, which shows $\{x_{i,S}\}$ and $(u,p)$ are optimal solutions to the primal and dual of the induced  configuration LP at price $p^*=p$ respectively. To see this, we have the followings:
\begin{itemize}
\item If $x_{i,S}>0$, then satisfaction property implies that $\mathbf{1}_S\in\underset{x'\in \binaryitems}{\argmax}u_i(x',\sum_{j=1}^{\numitems} p_jx'_j)$, and therefore: 
\begin{equation}
u_i+\sum_{j\in S}p_j=u_i(\mathbf{1}_S,\sum_{j\in S}p_j)+\sum_{j\in S}p_j=v^{(p^*)}_i(\mathbf{1}_S).
\end{equation}
\item If $p_j>0$, then market clearance property implies $\sum_{i\in\buyers,S:j\in S}{x}_{i,S}=1$.
\item If $u_i>0$, then this implies bidder $i$ gets a non-empty bundle of items under this WE, and hence $\sum_{S\subseteq \items} x_{i,S}=1$.
\end{itemize}
which is as desired. Now, for every $j\in\items$, let $\tilde{d}_j\triangleq\sum_{i\in\buyers,S:j\in S}{x}_{i,S}$ and $\tilde{p}_j\triangleq p_j$. Also, let $\tilde{d}_{\numgoods}=Z$ and $\tilde{p}_{\numgoods}=1$. We next show that $(\tilde{p},\tilde{d})$ is a fixed point of the market correspondence map $\phi$. To this end, for every $i\in\buyers$ let $S_i$ denotes the bundle of items such that $x_{i,S_i}=1$. Moreover, let $d^{(i)}_{\numgoods}=\frac{Z}{\numbuyers}-\sum_{j\in S_i}\tilde{p}_j$. We show $(\mathbf{1}_{S_i},d^{(i)}_{\numgoods})\in D_i(\tilde{p})$. Note that $\tilde{p}_{\numgoods}d^{(i)}_{\numgoods}+\sum_{j\in [\numgoods-1]}\mathbf{1}_{S_i}(j)\tilde{p}_j=\tilde{p}_{\numgoods}\frac{Z}{n}$. Moreover, if $x'\in\{0,1\}^{\numgoods-1}\times\mathbb{R}_+$ satisfies the budget constraint of agent $i$, i.e. $x'_{\numgoods}\tilde{p}_{\numgoods}+\sum_{j\in[\numgoods-1]}x'_j\tilde{p}_j\leq \tilde{p}_{\numgoods}\frac{Z}{\numbuyers}$, the following holds:
\begin{align*}
\tilde{u}_i(\mathbf{1}_{S_i},d^{(i)}_{\numgoods})&=u_i(\mathbf{1}_{S_i},\frac{Z}{\numbuyers}-d^{(i)}_{\numgoods})=u_i(\mathbf{1}_{S_i},\sum_{j\in S_i}\tilde{p}_j)\\
&\overset{(1)}{\geq} u_i(\{x'_j\}_{j\in [\numitems]},\sum_{j\in [\numitems]}\tilde{p}_jx'_j)\overset{(2)}{\geq} u_i(\{x'_j\}_{j\in [\numitems]},\frac{Z}{\numbuyers}-x'_{\numgoods})=\tilde{u}_i(x'),
\end{align*}
in which (1) holds because $S_i$ is an optimal bundle for bidder $i$ under prices $\{\tilde{p}\}_{j\in\items}$ and (2) holds because utilities $\{u_i(.)\}$ are decreasing with respect to money in the combinatorial auction. Therefore $d^{(i)}=(\mathbf{1}_{S_i},d^{(i)}_{\numgoods})\in D_i(\tilde{p})$ for $i\in[\numbuyers]$. Moreover, it is clear that $d^{(\numagents)}=(\mathbf{0},\sum_{j=1}^{\numitems}\tilde{p}_j)\in D_{\numagents}(\tilde{p})$. So, $\tilde{d}=\sum_{i\in\agents}d^{(i)}\in D(\tilde{p})\subseteq \tilde{D}(\tilde{p})$. Furthermore, $\tilde{d}_j\leq 1$ for $j\in[\numitems]$, and $\tilde{d}_{\numgoods}=Z$. Hence, for every price vector $p\in\mathbb{R}_+^{\numgoods}$, $p.(\tilde{d}-(1,1,\ldots,1,Z))\leq 0$. Due to the market clearance of the WE, we know that if $\tilde{p}_j>0$, then $\tilde{d}_j=1$. As a result, $\tilde{p}.(\tilde{d}-(1,1,\ldots,1,Z))=0$. So, $\tilde{p}\in\mathcal{F}(\tilde{d})$. Putting pieces together, $\tilde{p}\in\mathcal{F}(\tilde{d})$ and $\tilde{d}\in\tilde{D}(\tilde{p})$, and hence $(\tilde{p},\tilde{d})$ is a fixed-point of the market correspondence $\phi$.\qed

\section{Conclusion}

In the study of Walrasian equilibria, it is standard to assume that bidders have utilities that are quasilinear in money. Unfortunately, this is a strong assumption that has attracted little attention. We strive to study Walrasian equilibria in general combinatorial auction settings without assuming utilities are quasilinear, and our main results shed light on when they exist. Unsurprisingly, we find that some of the strong results for quasilinear bidders break when we relax our utility model. We show structure that does exist, and how it connects to a few key results for general quasilinear and unit demand non-quasilinear settings; however, we have only touched a small fraction of what is known about the quasilinear setting, and that is one source of interesting open questions. For example:
\begin{itemize}
\item {\em What natural properties of utility functions guarantee the existence of Walrasian equilibria?} In quasilinear settings, it is known that gross substitutes is sufficient in a combinatorial auction.
\item {\em When do equilibria have a lattice structure?} It is known that in quasilinear settings and in unit-demand non-quasiliniear ones, Walrasian equilibria have a lattice structure. Does this exist more generally?
\end{itemize}
Another direction for research surrounds generalizations of Walrasian equilibria:
\begin{itemize}
\item {\em When do combinatorial Walrasian equilibria exist in general?} In quasilinear settings where Walrasian equilibria fail to exist, one line of research shows that a kind of combinatorial equilibrium does exist~\cite{feldman2016combinatorial}.
\end{itemize}
A third direction for research is to ask when quasilinearity is justified:
\begin{itemize}
\item {\em What kinds of micromarkets naturally lead to quasilinear relationships with the global market?} Taking the view that Walrasian equilibria capture a small slice of a market, one should be able to identify conditions under which a micromarket naturally has a quasilinear relationship with the other options available to an agent.
\end{itemize}
\appendix
\bibliography{main}
\end{document}